\newtheorem{theorem}{Theorem}
\newtheorem{proposition}{Proposition}
\newtheorem{corollary}{Corollary}[theorem]
\newtheorem{lemma}{Lemma}
\theoremstyle{definition}
\newtheorem{definition}{Definition}
\theoremstyle{remark}
\newtheorem{example}{Example}
\newcommand{\trsp}[1]{#1^\mathsf{T}} 
\def\ba{\mathbf a}
\def\bb{\mathbf b}
\definecolor{darkred}{rgb}{0.5, 0, 0}
\title{Positivity Proofs for Linear Recurrences through Contracted
Cones}  
\author{Alaa Ibrahim\and Bruno Salvy}
\email{Alaa.Ibrahim@inria.fr, Bruno.Salvy@inria.fr}
\address{Univ Lyon, EnsL, UCBL, CNRS, Inria,  LIP, F-69342, LYON Cedex 07, France}
\date{\today}
\begin{document}
\begin{abstract}
Deciding the positivity of a sequence defined by a linear recurrence with polynomial coefficients and initial condition is difficult in general. Even in the case of recurrences with constant coefficients, it is known to be decidable only for order up to~5. We consider a large class of linear recurrences of arbitrary order, with polynomial coefficients, for which an algorithm decides positivity for initial conditions outside of a hyperplane. The underlying algorithm constructs a cone, contracted by the recurrence operator, that allows a proof of positivity by induction. The existence and construction of such cones relies on the extension of the classical Perron-Frobenius theory to matrices leaving a cone invariant.
\end{abstract}

\maketitle

\section{Introduction}
A sequence $(u_n)_{n\in \mathbb{N}}$ of real numbers is called \emph{P-finite} if it satisfies a  linear recurrence 
 \begin{equation}\label{rec}
    p_d(n)u_{n+d}=p_{d-1}(n) u_{n+d-1}+\dots+p_0(n)u_n,\qquad n\in \mathbb{N},
 \end{equation}
with coefficients $p_i\in\mathbb R[n]$ and~$p_d\neq0$. The sequence is moreover called \emph{C-finite} when the polynomials~$p_i$ are constant.
The problem of positivity is to determine whether $u_n\ge0$ for all $n\in\mathbb N$ or not.

When $p_d(n)\neq0$, \cref{rec} defines~$u_{n+d}$ in terms of the previous ones. 
As~$p_d$ can only have finitely many zeros, up to checking $n_0$ elements of the sequence first and considering the shifted sequence~$v_n=u_{n+n_0}$, one can assume without loss of generality that $p_d(n)\neq0$ for all~$n\in\mathbb N$ and thus that the sequence is entirely determined by its initial conditions~$u_0,\dots,u_{d-1}$. 
By the same argument one can also assume that $p_0(n)\neq0$ for
all~$n\in\mathbb N$, which is a hypothesis in our 
\cref{thm:decidability}.

For reasons of effectivity, it is necessary to restrict to a subfield of~$\mathbb R$. For simplicity, we state our algorithms and results over the rationals; real number fields can be handled similarly. Thus, we focus on the following.
\begin{definition}[Positivity problem for P-finite sequences] Given~$p_0,\dots,p_d$ in $\mathbb Q[n]$ such that $0\not\in p_0p_d(\mathbb N)$ and given $u_0,\dots,u_{d-1}$ in~$\mathbb Q$, decide whether the sequence $(u_n)$ defined by \cref{rec} satisfies $u_n\ge0$ for all~$n\ge0$.
\end{definition}

P-finite and C-finite sequences are closed under addition, product, Cauchy product $((u_n),(v_n))\mapsto(\sum_{k=0}^nu_kv_{n-k})$ and the operation of taking subsequences $(u_{\ell n+q})$ for fixed $\ell\in\mathbb N_{>0}$ and $q\in\mathbb N$. These operations are all effective: algorithms produce recurrences for these sequences given recurrences for the input~\cite[\S6.4]{Stanley1999}. This implies that if $(u_n)$ is a P-finite sequence, proving monotonicity ($u_{n+1}\ge u_n$), convexity ($u_{n+1}+u_{n-1}\ge 2u_n$), log-convexity ($u_{n+1}u_{n-1}\ge u_n^2$) or more generally an inequality with another P-finite sequence ($u_n\ge v_n$) are all problems that reduce to the positivity problem for P-finite sequences.

For applications of the positivity problem of C-finite sequences, we refer to the numerous references in the work of Ouaknine and Worrell~\cite{OuaknineWorrell2014a}. Motivations for studying positivity in the more general context of P-finite sequences also come from various areas of mathematics and its applications, including number theory~\cite{StraubZudilin2015}, combinatorics~\cite{ScottSokal2014}, special function theory~\cite{Pillwein2008}, or even biology~\cite{MelczerMezzarobba2022,YuChen2022,BostanYurkevich2022a}. In computer science, the verification of loops allowing multiplication by the loop counter leads to P-finite sequences~\cite{HumenbergerJaroschekKovacs2017,HumenbergerJaroschekKovacs2018}. Positivity questions for such recurrences also occur in the floating-point error analysis of simple loops obtained by discretization of linear differential equations~\cite{BoldoClementFilliatreMayeroMelquiondWeis2014} and in the numerical stability of the computation of sums of convergent power series~\cite{SerraArzelierJoldesLasserreRondepierreSalvy2016}. 

Several examples come from \emph{diagonals} of multivariate rational functions (see for instance~\cite{Christol2015}). This is the case of the sequence 
\begin{equation}\label{SZ}
s_n=\sum_{k=0}^n{(-27)^{n-k}2^{2k-n}\frac{(3k)!}{k!^3}\binom{k}{n-k}},
\end{equation}
whose positivity was proved by Straub and Zudilin~\cite{StraubZudilin2015} using properties of hypergeometric series, but that can also be proved from the recurrence
\[2(n+2)^2s_{n+2}=(81n^2+243n+186)s_{n+1}-81(3n+2)(3n+4)s_n,\]
although not in an immediate way. Also, a family of diagonals comes from the sequences 
\begin{equation}\label{eq:GRZ}
u_n^{(k)}=\sum_{j=0}^n(-1)^j\frac{(kn-(k-1)j)!k!^j}{(n-j)!^kj!}
\end{equation}
whose positivity for $k\ge4$ was a long-standing conjecture by Gillis, Reznick and Zeilberger~\cite{GillisReznickZeilberger1983}, proved only recently by Yu~\cite{Yu2019}. In our setting, this gives a family of tests of increasing size, as the sequences~$(u_n^{(k)})$ satisfy linear recurrences. Empirically, the $k$th one has order~$k$ and coefficients of degree~$k(k-1)/2$, see also~\cite{Pillwein2019}.

\subsection*{Decidability Issues}
Even in the case of C-finite sequences (when the recurrences have constant coefficients), the positivity problem is not completely understood. If 
\[\chi(x)=x^d-\frac{p_{d-1}}{p_d}x^{d-1}-\dots-\frac{p_0}{p_d}\]
is the characteristic polynomial of the recurrence and $\lambda_1,\dots,\lambda_k$ are its distinct roots, then it is classical that the solution takes the general form
\[u_n=C_1(n)\lambda_1^n+\dots+C_k(n)\lambda_k^n,\]
with polynomial $C_i$ that can be computed from the initial conditions~$u_0,\dots,u_d$.
Still, the positivity problem is difficult in this situation. It is only known to be decidable for order~$d\le 5$ and the case $d=6$ is related to open problems in Diophantine approximation~\cite{OuaknineWorrell2014a}. Moreover, by the closure properties for sum and product, the famous \emph{Skolem problem}, which asks whether a C-finite sequence has a~0 or not and is only known to be decidable for order~$d\le 4$, reduces to the positivity problem (in higher order)~\cite{OuaknineWorrell2014a}. For special classes of C-finite recurrences (simple, or reversible), positivity has been proved decidable for slightly larger order~\cite{LiptonLucaNieuwveldOuakninePurserWorrell2022,Kenison2022a,KenisonNieuwveldOuaknineWorrell2023}.
There is one subclass of C-finite recurrences of arbitrary order for which deciding the sign is easy: it is when one of the roots~$\lambda_i$, say~$\lambda_1$, dominates the other ones, in the sense that $|\lambda_1|>|\lambda_i|$ for $i\neq1$ and moreover the corresponding polynomial~$C_1(n)$ is not zero. Then asymptotically the term~$C_1(n)\lambda_1^n$ dominates~$u_n$ and one can bound effectively the other ones~\cite{OuaknineWorrell2014b}. The work presented here is a generalization of this favorable situation in the case of polynomial coefficients.

\subsection*{Previous Work}
In computer algebra, an important method was developed by Gerhold and Kauers~\cite{Gerhold2005,GerholdKauers2005}. The principle is to look for~$m$ such that the formula
\begin{multline*}
\forall n\ge0,\forall u_n\ge0,\forall u_{n+1}\ge0,\dots,\forall u_{n+d-1}\ge0,\\
u_{n+d}\ge0\wedge\dotsb\wedge u_{n+m}\ge0\Rightarrow u_{n+m+1}\ge0
\end{multline*}
can be proved by quantifier elimination and to increase~$m$ until the method succeeds or a prescribed bound has been reached. While there is no guarantee of success in general~\cite{KauersPillwein2010a}, this method is very powerful, giving for instance an automatic proof of Turán's inequality for Legendre polynomials~\cite{GerholdKauers2006}. Kauers and Pillwein refined this method for the special case of P-finite sequences. In particular, they observed that it is easier to prove the stronger property~$\exists\mu>0,u_{i+1}\ge\mu u_i$ and proved termination of the method for cases with a unique dominant eigenvalue (see \cref{def:dominant-eigenvalues}) and order~$d=2$, under a genericity assumption on the initial conditions; they also extended their analysis to a subclass of recurrences of order~$d=3$~\cite{Kauers2007a,KauersPillwein2010a,Pillwein2013}. For an appropriate choice of~$\mu$, this is also sufficient to prove the positivity of the sequences \eqref{eq:GRZ} for $k=4,\dots,17$~\cite{Pillwein2019}. Our approach can be viewed as extending this idea of introducing extra inequalities (the faces of the cones it constructs) to make a proof by induction feasible in arbitrary order. In particular, we recover termination in all cases for which this was proved before. Note that for order~2, a different approach based on continued fractions is also possible~\cite{KenisonKlurmanLefaucheuxLucaMoreeOuaknineWhitelandWorrell2021}.

\subsection*{Contribution}

It is classical that the recurrence from \cref{rec} can also be viewed
as a first-order linear recurrence~$U_{n+1}=A(n)U_n$ on vectors
in~$\mathbb R^d$, where $U_n=\trsp{(u_n,\dots,u_{n+d-1})}$ (see 
\cref{sec:eigenvals}) and $A(n)$ is a companion matrix. We consider
the equation~$U_{n+1}=A(n)U_n$ for arbitrary matrices~$A(n)$ of
rational functions. This offers a more geometric view on the positivity
problem: the vector~$U_n$ must remain in the cone~$\mathbb R_{\ge0}^d$. In the case of a constant matrix~$A$, there is a well-developed Perron-Frobenius theory for cones~\cite{TamSchneider2014} that relates spectral properties of the matrix~$A$ to the existence of cones that are stable under~$A$ in the sense that $AK\subset K$, or that are contracted ($AK\subset K^\circ$) (the classical situation is~$K=\mathbb R_{\ge0}^d$.) For the large class of recurrences of \emph{Poincaré type} (see \cref{def:poincare}), the matrix~$A(n)$ can be viewed as a perturbation of its limit~$A$ as~$n\rightarrow\infty$. This allows us to construct a subcone of $\mathbb R_{\ge0}^d$ that is stable under~$A(n)$ for $n$ sufficiently large and then one only has to check that the initial terms of the sequence are positive. This last step is always possible under a genericity condition that we make explicit, but that is not effective.
\begin{algorithm}
\caption{Main Steps of the Positivity Algorithm}
\label{algorithm:General}
\KwIn{ $A(n)\in\mathbb Q(n)^{d\times d}$ such that 
\( A = \lim\limits_{n \to \infty} A(n)\in\mathbb{Q}^{d\times d} \);
 $U_0\in\mathbb Q^d$ }
\KwOut{Positivity of $(U_n)_n$}
\begin{enumerate}
  \item[1.]  \textbf{Cone Construction:} Construct a cone
  $K\subset\mathbb R_{\ge0}^{d}$ s.t. $A
  (K\setminus \{0\})
\subset K^{\circ}$ 
  \item[2.] \textbf{Stability index:} Compute $m\in \mathbb{N}$
  s.t. for all $n\geq m$, $A(n)K\subset K$
   \item[3.] \textbf{Initial conditions check:}
 Compute $n_0\geq m$ s.t. $U_{n_0}\in K$ and 
 check if $U_0,U_1,\dots,U_{n_0-1}$ are positive.
\end{enumerate}
\end{algorithm}
An overview of the algorithm is given in \cref{algorithm:General}.

\begin{example}\label{example:3}
\begin{figure}
    \centering
    \includegraphics[width=0.25\textwidth]{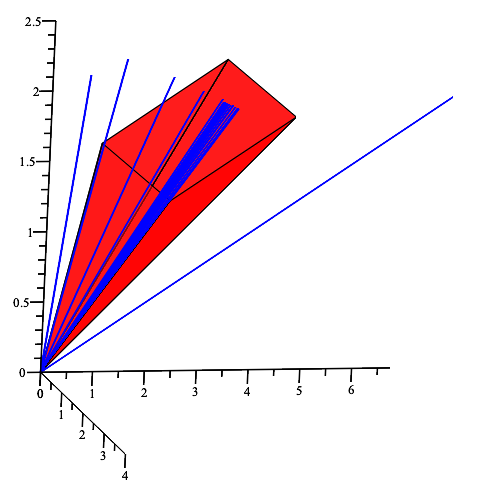}
    \caption{The first values of $U_n=(u_n,u_{n+1},u_{n+2})$ (in blue) in \cref{example:3}, together with the corresponding cone $K$ (in red).}
    \label{fig:example}
\end{figure}
\Cref{fig:example} displays the first values of the sequence
$(u_n)_n$ defined by $u_0=1,u_1=3,u_2=1$ and
    \[(16n+1)u_{n+3}=(32n-2)u_{n+2}-(21n-4)u_{n+1}+(5n-3)u_{n},\qquad n\ge0.\]
In this example, Step~1 constructs the red cone, Step~2 shows that for
$n\ge6$, this cone is stabilized by the matrix (in the sense that $A
(n)K\subset K$) and finally Step~3 checks that $U_6=\trsp{(u_6,u_7,u_8)}\in
K$ and that $u_0,u_1,\dots,u_5$ are positive. These conditions prove that \( (u_n)_n \) is a positive sequence.
 \end{example}

Our main result is the following theorem.

\begin{theorem}\label{thm:decidability}
For all linear recurrences of the form given in \cref{rec}, of order $d$ and of Poincar\'e
type, having a unique simple dominant eigenvalue and such that
$0\not\in p_0p_d(\mathbb N)$, the positivity of the solution
$(u_n)_n$ is decidable for any $U_0=(u_0,u_1,\dots,u_{d-1})$ outside
a hyperplane in $\mathbb{R}^d$.
\end{theorem}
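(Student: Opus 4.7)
The plan is to verify that each of the three steps of \cref{algorithm:General} can be executed effectively, and that the resulting procedure halts for every initial vector $U_0$ outside a distinguished hyperplane $H\subset\mathbb{R}^d$.

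For Step~1, the hypothesis of a unique simple dominant eigenvalue $\lambda$ of $A=\lim_n A(n)$ furnishes a one-dimensional Perron eigenline, spanned in the companion-matrix setting by $v_\lambda=\trsp{(1,\lambda,\dots,\lambda^{d-1})}$; for the positivity question to be meaningful one needs $\lambda\in\mathbb{R}_{>0}$, so $v_\lambda\in\mathbb{R}_{>0}^d$. Applying the cone-theoretic Perron-Frobenius machinery referenced in the introduction, I would construct a finitely generated polyhedral cone $K\subset\mathbb{R}_{\ge 0}^d$ shaped as a sufficiently thin tube around the Perron ray $\mathbb{R}_{>0}v_\lambda$, so that $A(K\setminus\{0\})\subset K^\circ$; the strict contraction reflects the fact that $A$ damps the remaining eigendirections at rates $|\lambda_i|/|\lambda|<1$.

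Step~2 then follows by continuity and finiteness. Since $K$ has finitely many extremal rays and finitely many defining half-spaces, the inclusion $A(n)K\subset K$ is equivalent to a finite list of scalar inequalities $q_j(n)\ge 0$ with $q_j\in\mathbb{Q}(n)$, and each $q_j$ inherits a strictly positive limit at $+\infty$ from Step~1. An effective stability index $m$ is then obtained by isolating the largest real root of each $q_j$ (after clearing denominators).

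The crux is Step~3, where the Poincar\'e-Perron theorem provides a nonzero linear form $c:\mathbb{R}^d\to\mathbb{R}$ such that $\lambda^{-n}U_n\to c(U_0)\,v_\lambda$ as $n\to\infty$. I would take $H=\ker c$, a proper hyperplane. For $U_0\notin H$, two subcases arise: if $c(U_0)>0$, the scaled trajectory enters any neighborhood of $v_\lambda$, in particular $K^\circ$, from some step $n_0$ onward, and $n_0$ is found by iterating the recurrence past~$m$ until $U_n\in K$ is detected; if $c(U_0)<0$, the first coordinate $u_n\sim c(U_0)\lambda^n$ becomes strictly negative for $n$ large, detected by the same iteration. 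Running both checks in parallel, exactly one terminates and yields the decision. The genuine obstacle — and the reason $H$ must be excluded — is that the sign of $c(U_0)$ cannot be read from $U_0$ in closed form: deciding it would itself amount to a positivity question. The genericity hypothesis $c(U_0)\neq 0$ guarantees only that one of the two parallel searches halts, and this is precisely what produces decidability away from $H$.
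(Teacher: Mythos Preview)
Your outline matches the paper's strategy: build a contracted polyhedral cone around the Perron ray, extract a stability index from the resulting finite system of rational inequalities in~$n$, and use convergence of the normalized trajectory to the Perron direction to guarantee termination off a hyperplane. The paper formalizes these three steps as \cref{pro:ConeExistence}, \cref{lem:Contraction}, and Friedland's \cref{thm:Friedland} with its \cref{coro:Friedland}. One terminological slip: what you invoke as ``the Poincar\'e--Perron theorem''---a linear form $c$ with $\lambda^{-n}U_n\to c(U_0)v_\lambda$---is stronger than the classical statement, which only gives $\lim u_{n+1}/u_n\in\{\lambda_i\}$ and does not by itself produce the linear functional. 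The paper obtains it from Friedland's result, and the hyperplane is $\{U_0:\trsp{W}U_0=0\}$ for the vector~$W$ appearing there.

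There is, however, one genuine omission. You assert that ``for the positivity question to be meaningful one needs $\lambda>0$'' and proceed only under that assumption, but the theorem covers $\lambda<0$ as well: a unique simple dominant eigenvalue is necessarily real and nonzero, yet its sign is not constrained by the hypotheses. The paper treats this case explicitly in \cref{algorithm:PP-scalar} and at the end of the proof of \cref{thm:decidability}: when $\lambda<0$, the eigenvector $v_\lambda=(1,\lambda,\dots,\lambda^{d-1})$ has coordinates of alternating sign, so by \cref{pro:ConeExistence} the cone contracted by $-A$ lies in a product of half-lines with alternating signs, and \cref{coro:Friedland} then shows that for generic $U_0$ the vector~$U_n$ eventually has a negative coordinate; the algorithm simply iterates until one appears. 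Your own asymptotic $u_n\sim c(U_0)\lambda^n$ would yield this immediately since $\lambda^n$ alternates in sign, but you must state the argument rather than dismiss the case.
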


A first algorithm giving this result was presented at the conference 
\textsc{Soda}~\cite{IbrahimSalvy2024}. The cones used in that work
were obtained from a pseudo-metric on~$\mathbb R_{>0}^d$ due to
Hilbert together with recent work of Friedland~\cite{Friedland2006}.
While this approach gives the same decidability result as \cref{thm:decidability}, the algorithm underlying it has severe inefficiencies on large examples, in particular because it requires to work with a power of the matrix instead of the matrix itself. The work presented here relies on cones that are simpler to handle and capture more closely the geometry of the iteration.

\subsection*{Positivity vs Non-negativity} We follow the literature in the definition of the positivity problem above, calling `positive' sequences that satisfy $u_n\ge0$ rather than $u_n>0$. For the class of recurrences to which our results apply, there is no difference in the difficulty of proving either condition. In terms of vocabulary, we call a vector~$V$ \emph{positive} (resp. non-negative) and write~$V>0$ (resp. $V\ge0$) when all its entries are positive (resp. non-negative), meaning that they belong to~$\mathbb R_{>0}$ (resp.~$\mathbb R_{\ge0}$). 

\subsection*{Structure of the Article} The basic definitions and
properties related to the matrix version of the recurrence are
introduced in \cref{sec:eigenvals}, those related to cones are given in 
\cref{sec:cones}. The proof of a more precise version of 
\cref{algorithm:General} and of \cref{thm:decidability} are given in 
\cref{sec:proof}. Several possible choices of cones for the algorithm
are given in \cref{section:constructions} and the related algorithms
in the following one. Possible improvements in the specific case of
matrices originating from linear recurrences like \cref{rec} are
discussed in \cref{sec:linrec}. A detailed example is treated in \cref{sec:example}.
\Cref{section:experiments} presents experimental results and perspectives are discussed in the final section.

\section{Matrix Recurrences}\label{sec:eigenvals}
\subsection{Scalars to Vectors}
A more geometric view on the recurrence relation in \cref{rec} is obtained by converting it into a vector form.
If $U_n$ denotes the vector $\trsp{(u_{n},\dots,u_{n+d-1})}$, then
\cref{rec} gives rise to the first-order linear recurrence 
\begin{equation}\label{eq:rec-vector}
U_{n+1}=A(n)U_n,
\end{equation}
where $A(n)\in\mathbb Q(n)^{d\times d}$ is the companion matrix
\begin{equation}\label{eq:companion}
A(n)=\begin{pmatrix}
    0 & 1 & 0 & \dots& 0\\
    0 & 0 & 1 & \dots& 0\\
    \dots&\dots &\dots&\dots&\dots\\
    0 &0  &0 &\dots&1\\
    \frac{p_0(n)}{p_d(n)}&\frac{p_1(n)}{p_d(n)} & \frac{p_2(n)}{p_d(n)}&\dots&\frac{p_{d-1}(n)}{p_d(n)}
    \end{pmatrix}, \qquad n\in\mathbb{N}.
\end{equation}
The sequence $U_n$ is then recovered from the vector of initial conditions by the matrix factorial \[U_n=A(n-1)A(n-2)\dotsb A(0)U_0.\]
The positivity of the sequence $(u_n)_{n\in\mathbb N}$ becomes
equivalent to $U_n\in{\mathbb R}_{\ge0}^d$ for all $n\in\mathbb N$.

\subsection{Constant Matrices and the Power Method}
Let $A$ be a matrix in~$\mathbb C^d$. 
The power method is a classical method to compute an eigenvector of such a matrix. The idea is to pick a random vector~$U_0$ and study the sequence $U_{n+1}=AU_n/\|U_n\|$, that, under suitable conditions, converges to an eigenvector.
This convergence is well-studied~\cite{ParlettPoole1973} and we focus here on a simple but generally satisfied sufficient condition given in \cref{lemma:power-method} below in terms of \emph{dominant eigenvalues}.
\begin{definition}[Dominant eigenvalues]\label{def:dominant-eigenvalues}
Let
$\lambda_1,\dots,\lambda_m$ be the distinct complex eigenvalues of the matrix $A$, numbered by decreasing modulus so that 
\[|\lambda_1|=|\lambda_2|=\dots=|\lambda_k|>|\lambda_{k+1}|\geq |\lambda_{k+2}|\dots \geq |\lambda_m|.\]
Then $\lambda_1,\dots,\lambda_k$ are called the \emph{dominant eigenvalues of $A$} (or equivalently \emph{dominant roots} of its characteristic polynomial). An eigenvalue is called \emph{simple} when it is a simple root of the characteristic polynomial.
\end{definition}
Given a characteristic polynomial in~$\mathbb Q[x]$, determining that it has a simple dominant root can be achieved in polynomial bit
complexity~%
\cite{GourdonSalvy1996,BugeaudDujellaFangPejkovicSalvy2022}. The following lemma follows from~\cite{ParlettPoole1973}; it is also a special case of Friedland's result given in \cref{thm:Friedland} below.

\begin{lemma}\label{lemma:power-method}
Let $A$ be a complex matrix with a simple dominant eigenvalue~$\lambda_1$. Then there exist nonzero vectors~$V$ and~$W$ with $AV=\lambda_1V$, $\trsp{A}W=\lambda_1W$ and a sequence $(\epsilon_n)$ of complex numbers with $|\epsilon_n|=1$ for all~$n$ such that 
\[\lim_{n\rightarrow\infty}\epsilon_n\frac{A^n}{\|A^n\|}\rightarrow V\trsp{W}.\]
\end{lemma}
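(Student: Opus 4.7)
The plan is to reduce to the Jordan decomposition of $A$ and exploit the simplicity of $\lambda_1$ to isolate a rank-one contribution from the $\lambda_1$-eigenspace.

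First I would write $A = P J P^{-1}$ with $J$ in Jordan form, ordered so that $J = \lambda_1 \oplus J'$, where $J'$ carries the Jordan blocks of all other eigenvalues and has spectral radius $\rho(J') < |\lambda_1|$. This decomposition is available because $\lambda_1$ is a simple root of the characteristic polynomial and strictly dominant in modulus. Then
\[
\frac{A^n}{\lambda_1^n} = P \begin{pmatrix} 1 & 0 \\ 0 & (J'/\lambda_1)^n \end{pmatrix} P^{-1}.
\]
Since $\rho(J'/\lambda_1) < 1$, each Jordan block of $J'/\lambda_1$ raised to the $n$th power tends to zero (the closed form for powers of a Jordan block shows that the polynomial factor in $n$ is dominated by the geometric decay of the eigenvalue). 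Hence $A^n/\lambda_1^n$ converges to $V\trsp{W}$, where $V$ is the first column of $P$ (a right $\lambda_1$-eigenvector) and $\trsp{W}$ is the first row of $P^{-1}$, giving $\trsp{W} A = \lambda_1 \trsp{W}$, equivalently $\trsp{A} W = \lambda_1 W$. By construction $\trsp{W} V = 1$, so $V\trsp{W}$ is a nonzero rank-one matrix.

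The next step will be to trade the normalization by $\lambda_1^n$ for a normalization by $\|A^n\|$. Since $\|A^n/\lambda_1^n\|$ tends to the positive constant $c := \|V\trsp{W}\|$, we have $\|A^n\|/|\lambda_1|^n \to c$. Setting $\epsilon_n = \overline{\lambda_1^n}/|\lambda_1|^n$, so that $|\epsilon_n|=1$, and writing
\[
\epsilon_n \frac{A^n}{\|A^n\|} = \frac{|\lambda_1|^n}{\|A^n\|} \cdot \frac{A^n}{\lambda_1^n},
\]
one obtains convergence to $V\trsp{W}/c$; rescaling $V$ by $1/c$ then matches the statement of the lemma without affecting the eigenvector property.

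The argument is essentially bookkeeping in elementary spectral theory, and no feature of $A$ beyond the simplicity and strict dominance of $\lambda_1$ is needed. The only step really deserving care is that when $\lambda_1 \notin \mathbb{R}$ the phases $\lambda_1^n/|\lambda_1|^n$ oscillate on the unit circle, so the bare quotient $A^n/\|A^n\|$ typically has no limit; the sequence $(\epsilon_n)$ is introduced precisely to absorb this oscillation and recover a genuine limit.
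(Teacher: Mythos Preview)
Your argument via the Jordan decomposition is correct and standard: the simplicity and strict dominance of $\lambda_1$ isolate a one-dimensional block, the remaining part $(J'/\lambda_1)^n$ decays to zero, and your handling of the phase via $\epsilon_n=\overline{\lambda_1^n}/|\lambda_1|^n$ together with the rescaling by $c=\|V\trsp{W}\|$ is exactly what is needed to pass from normalization by $\lambda_1^n$ to normalization by $\|A^n\|$.

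The paper, however, does not give its own proof of this lemma. It simply records that the statement follows from Parlett and Poole's analysis of the power method and that it is the constant-matrix special case of Friedland's theorem (stated later as \cref{thm:Friedland}, with $A(n)\equiv A$). So there is no argument in the paper to compare yours against beyond these citations. Your proof is a direct, self-contained verification that does not rely on either reference; in that sense it is more informative than what the paper offers, at the modest cost of invoking the Jordan form explicitly.
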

Thus for a \emph{generic} $U_0$ in the sense that $\trsp{W}U_0\neq0$, the direction of the sequence~$(U_n)$ converges to that of the eigenvector~$V$. 

This gives the geometric basis for a positivity algorithm in the case of linear recurrences with constant coefficients whose characteristic polynomial admits a unique dominant root. The general situation will then be analyzed as a perturbation of this situation.

\subsection{Recurrences of Poincaré Type}
An important subclass of linear recurrences whose behaviour is similar to that of recurrences with constant coefficients is the following.
\begin{definition}\label{def:poincare}
A linear recurrence of the form given in~\cref{rec} is said to be of 
\emph{Poincaré type} if the matrix \( A := \lim_{n \rightarrow \infty}
A(n) \) is finite and different from~0, i.e., all entries of \( A(n) \) converge to a
finite limit and at least one of them is not~0.
\end{definition}
The asymptotic behaviour of these recurrences is then related to the dominant eigenvalues of the limit matrix~$A$. 
(The condition that the limit is nonzero is only relevant when the order~$d$ is~1, where the companion matrix is a $1\times1$ matrix whose only entry is a rational function.)

It is always possible to reduce the positivity problem of a P-finite sequence to that of the solution of a linear recurrence of Poincaré type by re-scaling~\cite[\S2]{MezzarobbaSalvy2010}.
\begin{example}
The recurrence 
\[u_{n+3}+u_{n+2}+nu_{n+1}+(n+1)u_n=0\]
is not of  Poincaré type: the degrees of the last two coefficients are larger than that of the first one. The sequence defined by the recurrence
\[(n+2)\psi_{n+2}=\psi_n,\quad \psi_0=\psi_1=1\]
is positive. Thus the positivity of $(u_n)$ is equivalent to that of the sequence $(v_n)$ defined by $v_n=u_n\psi_n$. By closure under product, the sequence $(v_n)$ is P-finite. It satisfies the recurrence
\[
(n+6)(n+4)v_{n+6}+2(n+4)(n+1)v_{n+4}
+(n^2-n-5)v_{n+2}-(n+1)v_n=0,
\]
of Poincaré type.
\end{example}
The same operation can also be used if the matrix $A$ is nilpotent, using a recurrence of the form $u_{n+q} = n^pu_n$ instead, so that one can always assume that the recurrence is of Poincar\'e type, with $A$ having a nonzero eigenvalue.

\section{Cones}\label{sec:cones}
Our approach relies on the Perron-Frobenius theory for cones. 
We first recall basic definitions and properties of cones. We refer to the survey \cite{TamSchneider2014} for more information, including references to earlier work.

\subsection{Definitions}

\subsubsection*{Cone}
A subset \( K \) of \( \mathbb{R}^d \) is called a \emph{cone} if it is closed under addition and multiplication by positive scalars.

\subsubsection*{Proper cone} 
A cone \( K \) in \( \mathbb{R}^d \) is called \emph{proper} if it satisfies the following properties:
\begin{itemize}
    \item[--] \( K \) is \emph{pointed}, i.e., \( K \cap (-K) = \{0\} \).
    \item[--] \( K \) is \emph{solid}, i.e., its interior \( K^{\circ} \) is non-empty.
    \item[--] \( K \) is \emph{closed} in \( \mathbb{R}^d \).
\end{itemize}
All cones considered in this article are proper.
\subsubsection*{Extremal vectors}
A vector $x\in K$ is called \emph{extremal} if $x=y+z$ with $y\in K$ and $z \in K$ implies that both $y$ and $z$ are nonnegative multiples of $x$.

An important property is that a cone $K\subset\mathbb{R}^d$ is generated by its extremal vectors: any element of $K$ can be written as a finite linear combination of its extremal vectors with nonnegative scalar coefficients (see, e.g., \cite{Vandergraft1968}).

\subsubsection*{Polyhedral cone}
A cone $K\subset \mathbb{R}^d$ is called \emph{polyhedral} if it has a finite number of extremal vectors.

\subsubsection*{Positive cone}
A cone \( K \) is called \emph{positive} if  \( K\subset\mathbb{R}_{\ge0}^d \).

\subsubsection*{Contracted cone}
A cone $K$ is \emph{contracted} by a real matrix $A$ if $A(K\setminus \{0\})
\subset K^{\circ}$, where $K^\circ$ denotes the interior of $K$.

\subsection{Properties}
The study of the relation between dominant eigenvalues and contracted cones is an important part of the generalization of the Perron-Frobenius theory to cones. The main result we need is the following.

\begin{proposition}[Vandergraft \cite{Vandergraft1968}]
\label{thm:Vandergraft}
Let \( A \) be a real matrix. There exists a proper cone \( K \) contracted by $A$ if and only if \( A \) has a unique dominant eigenvalue \( \lambda > 0 \) that is simple.
The interior $K^\circ$ of \( K \) contains a unique eigenvector of \( A \) associated with \( \lambda \), up to constant multiples.
\end{proposition}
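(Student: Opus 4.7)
The plan is to prove both implications separately, and then address uniqueness of the eigenvector in $K^\circ$ as a consequence of the construction in one direction and the contraction argument in the other.

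For the direction $(\Leftarrow)$, i.e.\ constructing a cone from the spectral data, I would proceed as follows. Let $\lambda > 0$ be the simple dominant eigenvalue with eigenvector $v$, and write $\mathbb{R}^d = \mathbb{R}v \oplus W$ where $W$ is the (real) $A$-invariant complement consisting of the sum of the generalized eigenspaces of the remaining eigenvalues. Since every eigenvalue of $A|_W$ has modulus strictly less than $\lambda$, for any $\rho$ with $\max_{\mu \neq \lambda} |\mu|/\lambda < \rho < 1$ there is a norm $\|\cdot\|_W$ on $W$ for which $\|A|_W/\lambda\|_{\mathrm{op}} \le \rho$; this is a standard consequence of the Jordan form, constructing $\|\cdot\|_W$ by rescaling coordinates inside each Jordan block. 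Then for any fixed $c > 0$ I would set
\[
K = \{\, tv + w : t \ge 0,\ w \in W,\ \|w\|_W \le c\,t \,\}.
\]
Checking that $K$ is a proper cone (closed, pointed because the coefficient of $v$ is $\ge 0$, solid because the axis $\mathbb{R}_{>0}v$ lies in $K^\circ$) is routine. For contraction, given $x = tv + w \in K \setminus \{0\}$ we have $t > 0$ and $Ax = \lambda tv + Aw$ with $\|Aw\|_W \le \rho \lambda \|w\|_W \le \rho c\, (\lambda t)$, so the ratio between the $W$-norm and the $v$-coefficient strictly decreases, placing $Ax$ in $K^\circ$.

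For the converse $(\Rightarrow)$, I would use a fixed-point argument on a cross-section of $K$. Fix any hyperplane $H$ meeting $K^\circ$ transversally; then $S := K \cap H$ is a compact convex set. Because $K$ is contracted, the map $\phi: x \mapsto Ax / \ell(Ax)$, where $\ell$ is the linear form defining $H$, is well defined and continuous from $S$ into itself (positivity of $\ell(Ax)$ on $K \setminus \{0\}$ follows from $Ax \in K^\circ \subset \mathbb{R}_{>0}\cdot S$). Brouwer's theorem then yields a fixed direction $v \in K^\circ$, i.e.\ an eigenvector $Av = \lambda v$ with $\lambda > 0$. The uniqueness of this direction and the dominance of $\lambda$ would come from a strict-contraction argument in Hilbert's projective metric: since $A(K\setminus\{0\}) \subset K^\circ$, the projective diameter of $A(K)$ is finite, so $A$ acts as a strict contraction on $(K^\circ/\mathbb{R}_{>0}, d_H)$ by the Birkhoff--Hopf theorem, forcing the fixed point to be unique and the iterates of any $x \in K^\circ$ to converge projectively to $\mathbb{R}_{>0}v$.

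From this convergence I would read off the two remaining spectral facts. If $\lambda$ were not simple, some generalized eigenvector $u$ with $(A - \lambda I)u = v$ would produce $A^n u / \lambda^n = u + (n/\lambda) v$, whose projective limit is $\mathbb{R}_{>0}v$ but whose distance in Hilbert's metric from $\mathbb{R}_{>0}v$ does not decay at the required geometric rate along the $A$-iterates starting from a small perturbation placed in $K^\circ$, contradicting strict contraction. Similarly, if there were another eigenvalue $\mu$ with $|\mu| = \lambda$ and eigenvector $w$, suitable real combinations of $v$ and $\Re w$, $\Im w$ would lie in $K^\circ$ (because $v \in K^\circ$), yet the projections of their iterates onto the plane spanned by $\Re w, \Im w$ would rotate or reflect instead of contracting, again contradicting the strict contraction. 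Hence $\lambda$ is simple and strictly dominant.

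The main obstacle I expect is this last step: turning the qualitative fact ``$A$ is a strict Hilbert-metric contraction'' into a quantitative exclusion of non-simple or non-strictly-dominant eigenvalues without re-proving a substantial chunk of Perron--Frobenius theory. One clean way to sidestep generalized-eigenvector calculations is to invoke the fact that on $K^\circ$ the Hilbert metric is locally equivalent to any Riemannian metric, and then derive a spectral radius bound $\|A|_W\|_{\mathrm{op}} < \lambda$ directly from the contraction ratio of $A$ in that metric near the fixed direction $v$; this argument cleanly encapsulates both the simplicity and the strict dominance in a single estimate, and also gives the uniqueness of the eigenvector in $K^\circ$ claimed in the last sentence of the proposition.
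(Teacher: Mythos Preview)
The paper does not actually prove this proposition: it is quoted as a result of Vandergraft, and the only part the paper revisits is the constructive $(\Leftarrow)$ direction, recalled in Section~\ref{subsec:vandergraft}. There, a Jordan basis $(V_{i,j})$ with $\varepsilon$-rescaled off-diagonal entries is fixed and the cone is defined by the inequalities $a_{1,1}\ge|a_{i,j}|$ in those coordinates. Your construction is the same idea in more abstract clothing: the ``norm on $W$ obtained by rescaling coordinates inside each Jordan block'' is precisely the $\varepsilon$-Jordan basis, and your cone $\{tv+w:\|w\|_W\le ct\}$ is exactly $K(V_{i,j})$ up to the choice of norm (sup vs.\ your unspecified $\|\cdot\|_W$) and the scaling constant~$c$. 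So for the direction that the paper actually treats, you match it.

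For $(\Rightarrow)$ the paper offers nothing beyond the citation, so your Brouwer/Birkhoff--Hopf route is genuinely additional content rather than a comparison point. It is a correct and standard line: the cross-section map is well defined once $\ell$ is chosen in the interior of the dual cone $K^*$ (you should say this explicitly rather than ``any hyperplane meeting $K^\circ$ transversally''), and finiteness of the projective diameter of $A(K\setminus\{0\})$ follows from compactness of the image of a cross-section inside $K^\circ$. Your worry about the last step is legitimate but not fatal. The cleanest way to finish is the one you sketch second: near $v$ the Hilbert metric is bi-Lipschitz equivalent to any Euclidean metric, so the strict contraction ratio $\kappa<1$ forces $\|(A/\lambda)^n x - v\|\to 0$ geometrically for every $x$ in a neighbourhood of $v$ in $K^\circ$; this immediately rules out both a Jordan block at $\lambda$ (which would give only $O(1/n)$ decay along $v+\epsilon u$) and another eigenvalue of modulus $\lambda$ (which would prevent convergence altogether along $v+\epsilon\,\Re w$). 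That single estimate also yields the uniqueness of the eigenvector in $K^\circ$, as you note. Your first sketch of this argument is a bit loose as written---you should make explicit that the perturbed vector lies in $K^\circ$ and quantify the Hilbert distance via the local bi-Lipschitz equivalence---but the underlying reasoning is sound.
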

Vandergraft's proof is constructive. Variants can also be
turned into algorithms, presented in \cref{section:constructions}.

\section{Proof of the Algorithm}\label{sec:proof}
We now prove \cref{thm:decidability} by 
first proving the correctness of \cref{algorithm:PP}.
 \begin{algorithm}[t]
\caption{Positivity algorithm for matrix recurrences}
\label{algorithm:PP}
\KwIn{ $A(n)\in\mathbb Q(n)^{d\times d}$ ;
 $U_0\in\mathbb Q^d$\\ $A=\lim_{n\rightarrow\infty}A(n)$ has a unique simple dominant eigenvalue $\lambda>0$\\ and a positive eigenvector associated to it}
\KwOut{One of Positive, Non-Positive}

 1.\hspace{0.5cm}Construct a cone $K\subset\mathbb{R}^d_{\ge0}$ s.t $A
 (K\setminus \{0\})\subset K^{\circ}$ \\
  2.\hspace{0.5cm}Compute $m\in \mathbb{N}$ such that for all $n\geq m$, $A(n)K\subset K$\\

  3.\hspace*{0.5cm}{\bf for }{$i=0,\dots,m-1$ }{\bf if }{{$U_i\not\ge0$ }{\bf then }{\bf return }(Non-positive)}\\
    4.\hspace*{0.5cm}{\bf for }{$i=m,m+1,\dots,\infty$}\\
    \hspace*{1.5cm}{\bf if }{$U_i\not\ge0$ }{\bf then }{\bf return } Non-positive\\
    \hspace*{1.5 cm}{\bf if }{$U_i \in K$ }{\bf then}{ \bf return } Positive
\end{algorithm}

\begin{theorem}\label{theorem:main}
Let~$A(n)$ be a matrix in $\mathbb Q(n)^{d\times d}$, invertible for all $n\in\mathbb N$, and tending to a finite limit~$A$ as~$n\rightarrow\infty$, that has a unique simple positive dominant eigenvalue and a positive eigenvector associated to it. Then there exists a vector $W\in\mathbb R^d$ such that  positivity of the solution of~$U_{n+1}=A(n)U_n$ given~$A(n)$ and~$U_0\in\mathbb Q^d$ can be decided when $\trsp{W}U_0\neq0$.
\Cref{algorithm:PP} either disproves positivity or proves it in
the generic situation $\trsp{W}U_0\neq0$.
\end{theorem}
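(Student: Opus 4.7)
The plan is to prove correctness of \cref{algorithm:PP} step by step, since the theorem asserts exactly that this algorithm is a decision procedure under the stated genericity condition. The easy direction is soundness: if the loop returns Non-positive because some $U_i \not\ge 0$, positivity of $(u_n)$ fails; conversely, if it returns Positive because $U_i \in K$ for some $i \ge m$, then the stability property $A(n)K \subset K$ for $n \ge m$ established at Step 2, together with the fact that $K \subset \mathbb{R}^d_{\ge 0}$, gives by induction that $U_j \in K \subset \mathbb{R}^d_{\ge 0}$ for all $j \ge i$, while the loop of Step 3 has already verified $U_0,\dots,U_{i-1} \ge 0$.

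The existence and effectivity of Steps 1 and 2 are where the cone theory of \cref{sec:cones} carries the load. For Step 1, \cref{thm:Vandergraft} directly yields a proper cone contracted by $A$ whose interior contains a positive eigenvector $V$ associated with $\lambda$; since $V \in \mathbb{R}^d_{>0}$, a small enough such cone lies in $\mathbb{R}^d_{\ge 0}$, as will be made concrete by the constructions of \cref{section:constructions}. For Step 2, I would argue by compactness: on the unit sphere of $\mathbb{R}^d$, the image $A(K \cap S^{d-1})$ is a compact subset of the open set $K^\circ$, hence has positive distance to $\partial K$; combined with the entrywise convergence $A(n) \to A$ and the rational-function form of $A(n)$, this produces a computable threshold $m$ beyond which $A(n)K \subset K$.

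The main obstacle, and the step that genuinely consumes the genericity hypothesis, is termination of the infinite loop at Step 4. I need to show that under the assumption $\trsp{W}U_0 \neq 0$ for a suitably chosen $W$, there exists a finite $n_0 \ge m$ with $U_{n_0} \in K$. The intended mechanism is a Perron-Poincaré type convergence: the direction of $U_n$ should converge to that of $V \in K^\circ$, so that once it is close enough, $U_n$ itself lies in $K$. In the constant-coefficient case this is \cref{lemma:power-method} with $W$ the left eigenvector of $A$ associated with $\lambda$; the obstruction $\trsp{W}U_0 \neq 0$ ensures a nonzero component along $V$. For the Poincaré-type variable-coefficient iteration, I would extend this by treating $A(n)$ as a perturbation of $A$: a telescoping/product estimate on $A(n-1)\cdots A(0)$ identifies the correct $W$ (morally a ``left eigenvector carried back through the matrix factorial'') such that $\trsp{W}U_0 \neq 0$ still guarantees direction convergence to $V$.

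Only the existence of $n_0$, not an a priori bound on it, is needed for decidability: once termination is established, the algorithm is guaranteed to halt, either by detecting a negative coordinate or by entering $K$. The non-effective ingredient is precisely this bound on $n_0$, which is why the result is stated as decidability rather than with explicit complexity; this is also what keeps the exceptional set $\{\trsp{W}U_0 = 0\}$ outside the scope of the algorithm.
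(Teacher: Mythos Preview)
Your proposal is correct and follows essentially the same route as the paper: soundness is immediate, Step~1 is \cref{pro:ConeExistence}, Step~2 is exactly the compactness argument of \cref{lem:Contraction}, and termination of Step~4 is \cref{coro:Friedland}. The one place where you are vaguer than the paper is the direction convergence for the variable-coefficient product: rather than a hand-rolled perturbation/telescoping estimate, the paper invokes Friedland's theorem (\cref{thm:Friedland}) as a black box to obtain both the vector~$W$ and the limit $\epsilon_n U_n/\|U_n\|\to cV$, and you should note that this limit may point along $-V$, so that $U_n$ eventually lies in $-K\subset\mathbb R^d_{\le0}$ and the loop exits via the Non-positive branch rather than by entering~$K$.
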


If $A(n)$ is invertible in $\mathbb Q(n)^{d\times d}$, then its determinant can have only finitely many zeros. Thus in that case, up to checking $n_0$ elements of the sequence first and considering the shifted sequence~$V_n=U_{n+n_0}$, one can assume without loss of generality that $A(n)$ is invertible for all~$n\in\mathbb N$.

\smallskip 
The proof of \cref{theorem:main} is obtained by considering the steps of the algorithm one by one.

\subsection{Cone construction}

\begin{proposition}[Cone with fixed signs]\label{pro:ConeExistence}
Let \( A \in \mathbb{R}^{d \times d} \) be a real  matrix with a
simple dominant eigenvalue \( \lambda > 0 \) and  an
eigenvector $v$ associated to $\lambda$ with nonzero coordinates. Let $\epsilon_i$ be `${\ge0}$' if $v_i>0$ and `$\le 0$' otherwise.
Then there exists a proper cone \( K \subset \mathbb R_{\epsilon_1}\times\dots\times\mathbb{R}_{\epsilon_d} \) contracted by \( A \) containing $v$. In particular, if $v$ is positive, one can take $K\subset\mathbb R_{\ge0}^d$.
\end{proposition}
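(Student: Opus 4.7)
The plan is to reduce to the case of a positive eigenvector by a diagonal sign change, and then to build an explicit ``thin cone'' around the ray $\mathbb{R}_{\ge0}v$ that is small enough to fit in $\mathbb{R}_{\ge 0}^d$ yet wide enough to be proper, with contraction coming from the fact that $\lambda$ strictly dominates the rest of the spectrum. For the reduction, let $D$ be the diagonal matrix with $D_{ii}=\mathrm{sign}(v_i)\in\{+1,-1\}$. The conjugate matrix $\widetilde{A}:=D^{-1}AD$ has the positive vector $D^{-1}v$ as an eigenvector for the same eigenvalue $\lambda$, and if $K'\subset\mathbb{R}_{\ge0}^d$ is proper, contracted by $\widetilde{A}$, and contains $D^{-1}v$, then $K:=DK'$ is proper, contracted by $A$, contains $v$, and lies in $\mathbb{R}_{\epsilon_1}\times\cdots\times\mathbb{R}_{\epsilon_d}$. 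So I may assume $v>0$ throughout.

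The construction goes as follows. Since $\lambda$ is simple, I decompose $\mathbb{R}^d=\mathbb{R}v\oplus E$, where $E$ is the $A$-invariant complement obtained as the sum of generalized eigenspaces for the other eigenvalues. The restriction $A|_E$ has spectral radius $\rho<\lambda$, so I can choose a norm $\|\cdot\|_E$ on $E$ such that the induced operator norm of $A|_E$ is some $\rho'$ with $\rho\le\rho'<\lambda$. For a parameter $\alpha>0$ to be fixed, set
\[
K_\alpha := \{\,tv+y : t\in\mathbb{R}_{\ge0},\ y\in E,\ t\ge\alpha\|y\|_E\,\}.
\]
This is a closed convex cone; it is pointed (since $K_\alpha\cap(-K_\alpha)=\{0\}$), solid (it contains a neighborhood of the ray through~$v$), hence proper; and it contains $v$.

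To verify contraction, fix $x=tv+y\in K_\alpha\setminus\{0\}$ and write $Ax=\lambda tv+Ay$ with $Ay\in E$. If $y=0$ then $t>0$ and $Ax=\lambda tv$ lies in the interior $K_\alpha^\circ=\{t'v+y':t'>\alpha\|y'\|_E\}$. If $y\ne0$, then $t\ge\alpha\|y\|_E>0$ and
\[
\alpha\|Ay\|_E \le \alpha\rho'\|y\|_E \le \rho' t < \lambda t,
\]
so again $Ax\in K_\alpha^\circ$. To ensure $K_\alpha\subset\mathbb{R}_{\ge0}^d$, I use equivalence of norms on the finite-dimensional space $E$ to pick $C>0$ with $|y_i|\le C\|y\|_E$ for every $y\in E$ and every coordinate $i$; taking $\alpha\ge C/\min_i v_i$, every $x=tv+y\in K_\alpha$ satisfies $x_i=tv_i+y_i\ge tv_i-C\|y\|_E\ge t(v_i-C/\alpha)\ge 0$.

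The main obstacle is producing the norm $\|\cdot\|_E$ in which $A|_E$ has operator norm strictly less than~$\lambda$; this is where the dominance of $\lambda$ enters and is the only non-elementary input, being a standard consequence of Gelfand's formula, or, more concretely, of a diagonal rescaling applied to a Jordan basis of $A|_E$. Once this norm is chosen, the rest is an elementary geometric verification of properness, inclusion in the target orthant, and contraction.
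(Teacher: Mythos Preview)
Your proof is correct, but it takes a genuinely different route from the paper's. The paper invokes Vandergraft's theorem (\cref{thm:Vandergraft}) to obtain \emph{some} contracted cone~$K$ containing~$v$ in its interior, and then appeals to a result of Tam and Schneider stating that $\bigcap_{j\ge0}A^jK=\{v\}$; since $v$ has no zero coordinate, some iterate $A^mK$ must lie in the target orthant, and this iterate is again contracted. Your argument instead builds the cone directly from the spectral decomposition $\mathbb{R}^d=\mathbb{R}v\oplus E$: after choosing a norm on~$E$ in which $A|_E$ has operator norm~$\rho'<\lambda$, the family $K_\alpha=\{tv+y:t\ge\alpha\|y\|_E\}$ gives a one-parameter collection of contracted cones that can be made arbitrarily thin around~$\mathbb{R}_{\ge0}v$ by increasing~$\alpha$. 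This is closer in spirit to the explicit constructions in \cref{section:constructions} (indeed, Vandergraft's cone in \cref{eq:cone-Vandergraft} is your $K_\alpha$ for a particular choice of norm, and the rescaling of \S5.2 plays the role of your~$\alpha$), but you package it as a single self-contained argument rather than citing the Tam--Schneider intersection lemma. The paper's route is more modular and directly sets up the later algorithmic sections; yours is more elementary and makes the dependence on the spectral gap $\lambda-\rho$ completely transparent.
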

\begin{proof}
By Vandergraft's result (\cref{thm:Vandergraft}), there exists a
proper cone \( K \subset \mathbb{R}^d \) contracted by $A$ and the
interior of this cone contains a unique dominant eigenvector of \( A
\). Without loss of generality, we can assume \( v \in K^\circ \), as
we could replace \( K \) with \( -K \) if necessary.

We now consider the decreasing sequence of cones
\[
K\supset AK\supset\dots\supset A^m K.
\]
By a result of Tam and Schneider~\cite[Lemma~5.3]{TamSchneider1994}, the
limit of this sequence is
\[
\bigcap_{j=0}^{\infty} A^j K = \{v\}.
\]
Since $v$~has nonzero coordinates and belongs to~$K^\circ$, for $m$ sufficiently large, the cone $A^mK$ is
included in~$\mathbb R_{\epsilon_1}\times\dots\times\mathbb{R}_{\epsilon_d} $.
\end{proof}
A more practical algorithm that avoids repeated multiplication by \( A \) to construct a positive cone  is described in \cref{section:constructions}.

\subsection{Stability index}
The next step of the algorithm requires an index~$m$ such that $A
(n)K\subset K$ for $n\ge m$. A compactness argument gives its existence. \begin{proposition}\label{lem:Contraction}
Let $A(n)$ be in $\mathbb{R}^{d\times d}$ for $n\in\mathbb{N}$ and have finite limit $A$ as $n\rightarrow\infty$. Let $K$ be a proper cone contracted by $A$. Then there exists $m\in\mathbb{N}$  such that $A(n) K\subset K$ for all $n\geq m$.
\end{proposition}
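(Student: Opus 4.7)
The plan is to exploit compactness and the fact that $A$ maps the unit sphere of $K$ strictly into the interior $K^\circ$, then transfer this to $A(n)$ using uniform convergence.

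First I would introduce the compact slice $S := K \cap \{x \in \mathbb R^d : \|x\| = 1\}$, which is nonempty (since $K$ is solid, hence nonzero) and compact (intersection of the closed cone with the unit sphere). By positive homogeneity of a cone, the inclusion $A(n) K \subset K$ is equivalent to $A(n) S \subset K$, so it suffices to control what happens on $S$. Note also that $0 \notin S$, which matters because $0 \in \partial K$ but $0 \notin K^\circ$ (by pointedness).

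Next I would argue that $A(S)$ is compactly contained in $K^\circ$. Continuity of $A$ makes $A(S)$ compact, and since $S \subset K \setminus \{0\}$, the contraction hypothesis $A(K \setminus \{0\}) \subset K^\circ$ yields $A(S) \subset K^\circ$. Because $K^\circ$ is open and $\mathbb R^d \setminus K^\circ$ is closed and disjoint from the compact set $A(S)$, the distance
\begin{equation*}
\delta := \mathrm{dist}\bigl(A(S),\, \mathbb R^d \setminus K^\circ\bigr)
\end{equation*}
is strictly positive. In particular, any vector within distance less than $\delta$ of a point of $A(S)$ still lies in $K^\circ \subset K$.

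Finally I would use the hypothesis $A(n) \to A$ to make $A(n)$ uniformly close to $A$ on $S$. Fix any submultiplicative matrix norm $\|\cdot\|$ compatible with the Euclidean norm on $\mathbb R^d$. There exists $m \in \mathbb N$ such that $\|A(n) - A\| < \delta$ for all $n \ge m$. Then for every $x \in S$,
\begin{equation*}
\|A(n)x - Ax\| \le \|A(n) - A\|\, \|x\| < \delta,
\end{equation*}
so $A(n) x \in K^\circ \subset K$. Hence $A(n) S \subset K$, and by positive homogeneity $A(n) K \subset K$ for all $n \ge m$.

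The argument has no real obstacle once the right compact set is chosen; the only subtle point is making sure we work with a set bounded away from $0$ (so that contraction puts us into $K^\circ$ and not merely $K$), and then converting the pointwise convergence $A(n) \to A$ into a uniform statement on a compact set via the operator norm. Note that this proof is non-constructive in $m$; an effective bound requires quantitative control of both $\delta$ (e.g.\ via an explicit description of the cone) and the rate of convergence $A(n) \to A$, which is the subject of the algorithmic developments in later sections.
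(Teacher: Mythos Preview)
Your proof is correct and follows essentially the same approach as the paper: restrict to the compact slice $K\cap S_{d-1}$, use the contraction hypothesis to get $A(K\cap S_{d-1})$ compactly contained in $K^\circ$, extract a positive radius (your $\delta$, the paper's $r$) around the image, and then use convergence $A(n)\to A$ to place $A(n)x$ inside that neighborhood. The only cosmetic difference is that you phrase the positive-distance step via $\mathrm{dist}(A(S),\mathbb R^d\setminus K^\circ)$ and an operator-norm bound, whereas the paper speaks of balls $B(Ax,r)\subset K$; the content is the same.
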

\begin{proof}Fix a norm~$\|\cdot\|$ of $\mathbb R^d$.
Let  $S_{d-1}$ denote the unit sphere in $\mathbb{R}^d$.
By linearity, it suffices to prove that $ A(n) (K \cap S_{d-1}) \subset K $ for $n\ge m$.\\
As $K$ is a closed cone, $K \cap S_{d-1}$ is compact and so is $A(K
\cap S_{d-1})\subset K^\circ$ by continuity of~$A$. It follows that there
exists~$r>0$ such that for all $x\in K
\cap S_{d-1}$, the open ball $B(Ax,r)$ for~$\|\cdot\|$ is included
in~$K$. And then also by compactness, as $ A(n)$ converges to  $A$,
there
exists~$m>0$ such that for all~$n\ge m$ and~$x\in K\cap S_{d-1}$,
$A(n)x\in B(Ax,r)\subset K$, as was to be proved.
\end{proof}

\subsection{Convergence into the Cone}
The previous steps (construction of the cone, stability index) depend
only on the matrix~$A(n)$, but not on the initial conditions. The next
step ensures that the sequence enters the constructed cone, under
genericity assumptions. It relies on the following strong generalization of the classical power method
of \cref{lemma:power-method}.
\begin{proposition} [Friedland~\cite{Friedland2006}]
\label{thm:Friedland}
Let~$A(n)$ be in~$\operatorname{GL}_d(\mathbb R)$ for~$n\in\mathbb N$ and tend to a finite limit~$A$ as $n\rightarrow\infty$, such that $A$ has a unique simple nonzero dominant eigenvalue $\lambda$. 
Then there exist a sequence $(\epsilon_n)$ in $\{-1,1\}^\mathbb N$ and two nonzero vectors $V,W$ s.t. $AV=\lambda V$ and
\[
\lim_{n\rightarrow\infty}\epsilon_n\frac{A(n)\dotsm A(1) A(0)}{\|A(n)\dotsm A(1) A(0)\|}=V\trsp{W}.\]
A vector of initial conditions $U_0$ is called \emph{generic} when $\trsp{W}U_0\neq0$.
\end{proposition}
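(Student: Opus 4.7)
My approach would be to reduce the time-varying problem to a perturbation of the classical power method of \cref{lemma:power-method}. The key ingredient is the spectral decomposition of the limit matrix: since $\lambda$ is a unique simple nonzero dominant eigenvalue, one can write $A=\lambda P + S$, where $P=V\trsp{W}$ is the spectral projector onto the $\lambda$-eigenspace (normalized so $\trsp{W}V=1$, $AV=\lambda V$, $\trsp{A}W=\lambda W$), and $S=A(I-P)$ has spectral radius $\rho<|\lambda|$, with $PS=SP=0$. In this decomposition $A^n/\lambda^n=P+(S/\lambda)^n \to P$ with geometric rate $\rho/|\lambda|$, which establishes the result in the constant case (up to the sign $\epsilon_n=\operatorname{sign}(\lambda^n)$ that accounts for a possibly negative $\lambda$).

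For the genuinely time-varying case I would write $A(n)=A+E(n)$ with $\|E(n)\|\to 0$, and work in a chart on projective space $\mathbb{P}^{d-1}$ centered at $[V]$. The projective action of $A/\lambda$ has $[V]$ as an attracting fixed point with linearized contraction ratio $\rho/|\lambda|<1$; by continuity, for $n$ large enough the projective action of $A(n)/\lambda$ remains a contraction on some fixed neighborhood of $[V]$. An inductive telescoping argument combining the uniform geometric contraction with the summability-in-the-limit of $\|E(n)\|$ then shows that every ray $[U_0]$ with $\trsp{W}U_0\neq 0$ is eventually attracted to $[V]$.

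To upgrade this convergence of individual rays into convergence of the normalized matrix product, I would apply it column by column. For any $U$ with $\trsp{W}U\neq 0$, the direction of $A(n)\dotsm A(0)U$ tends to that of $V$, with the scalar factor controlled by $\lambda^{n+1}\,\trsp{W}U(1+o(1))$; for $U\in\ker\trsp{W}$, the image lies in the $A$-invariant complement and is of strictly smaller order, so it contributes negligibly after normalization. Combining these two behaviors by linearity yields $A(n)\dotsm A(0)/\|A(n)\dotsm A(0)\| \to \pm V\trsp{W}/\|V\trsp{W}\|$, and fixing $\epsilon_n$ as the sign of a distinguished nonzero entry of the product removes the ambiguity.

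The main obstacle is making the ``approximate contraction'' on projective space rigorous in the presence of the perturbations $E(n)$: one has to control the accumulation of these vanishing errors against the exponential contraction, and simultaneously prevent the non-dominant invariant subspace—which, unlike in the constant case, is not preserved by $A(n)$—from drifting away from $\ker\trsp{W}$ as $n$ grows. This is precisely the delicate point of Friedland's argument; once handled, everything else is linear algebra and a straightforward limit.
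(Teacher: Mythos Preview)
The paper does not prove this proposition at all: it is quoted as a result of Friedland~\cite{Friedland2006} and used as a black box. So there is no ``paper's proof'' to compare against, only the remark that immediately follows the statement.

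That remark, however, points to a genuine gap in your sketch. You take $W$ to be the left eigenvector of the \emph{limit} matrix $A$ (writing $\trsp{A}W=\lambda W$, $P=V\trsp{W}$). But the $W$ appearing in Friedland's conclusion is \emph{not} that vector in general: the paper explicitly observes that in the constant-coefficient case $W$ is an eigenvector of $\trsp{A}$ and hence algebraic, whereas in the time-varying case ``the situation is more involved,'' and the Ap\'ery example gives $W=(1,-5+6/\zeta(3))$, which cannot be a left eigenvector of any rational matrix. The correct $W$ depends on the whole sequence $(A(n))_n$, not just on $A$; your decomposition $A=\lambda V\trsp{W}+S$ therefore uses the wrong object from the outset, and the subsequent claim that vectors in $\ker\trsp{W}$ produce products of strictly smaller order is precisely what fails.

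You do flag, in your last paragraph, that the non-dominant subspace is not preserved by $A(n)$ and may ``drift away from $\ker\trsp{W}$''; but phrased that way it sounds like a technical nuisance to be controlled, when in fact this drift is the whole content of the theorem: it \emph{determines} a new $W$, and identifying that limiting hyperplane is exactly what Friedland's argument (via multiplicative ergodic/Oseledets-type estimates on the product) accomplishes. Your projective-contraction idea does establish that generic orbits converge to $[V]$, which gives the $V$ part; what is missing is an independent construction of the limiting functional $\trsp{W}$ and the proof that it is the same for all columns, which cannot be read off from the spectral projector of $A$ alone.
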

While in the case of constant coefficients in \cref{lemma:power-method}, the vector~$W$ is an eigenvector of~$\trsp{A}$ and thus can be taken with algebraic coefficients, the situation in the more general situation is more involved and more difficult to control.
\begin{example}
Ap\'ery's recurrence is
\[(n + 2)^3u_{n + 2} =(2n + 3)(17n^2 + 51n + 39)u_{n + 1} -(n+1)^3u_n.\]
Up to constant factors, the vectors turn out to be
\[V=(1,(3+2\sqrt2)^2),\qquad W=(1,-5+6/\zeta(3)).\]
As $\zeta(3)\not\in\mathbb Q$, it follows from Friedland's theorem that any choice of initial conditions in~$\mathbb Q^2$ is generic in this example.
\end{example}
\begin{corollary}\label{coro:Friedland}
Under the hypotheses of \cref{thm:Friedland}, let $K$
be a cone contracted by~$A$, 
let $U_0$ be
generic in the sense of \cref{thm:Friedland} and let the sequence $
(U_n)_n$ be defined by
the recurrence~$U_
{n+1}=A(n)U_n$ for $n\ge0$. Then, there exists~$m\ge0$, such that for all~$n\ge
m$, either~$U_n\in K$ or $U_n\in -K$.
\end{corollary}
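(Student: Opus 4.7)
The plan is to apply Friedland's theorem (\cref{thm:Friedland}) to the matrix product $P_n := A(n-1)\dotsm A(1)A(0)$ and track the direction of $U_n = P_n U_0$. The theorem provides signs $\epsilon_n\in\{-1,1\}$ and nonzero vectors $V,W$ with $AV=\lambda V$ such that $\epsilon_n P_n/\|P_n\|\to V\trsp{W}$. Multiplying on the right by~$U_0$ and using the genericity hypothesis $\trsp{W}U_0\neq 0$, I get
\[
\epsilon_n\frac{U_n}{\|P_n\|}\;\longrightarrow\;(\trsp{W}U_0)\,V,
\]
a nonzero vector. In particular $\|U_n\|/\|P_n\|$ is bounded away from~$0$ for $n$ large, so dividing by $\|U_n\|$ yields
\[
\epsilon_n\frac{U_n}{\|U_n\|}\;\longrightarrow\;\operatorname{sign}(\trsp{W}U_0)\,\frac{V}{\|V\|}.
\]

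Next I would locate $V$ relative to~$K$. Since $K$ is contracted by~$A$, \cref{thm:Vandergraft} says that $K^\circ$ contains an eigenvector for~$\lambda$, unique up to positive scalar. Simplicity of~$\lambda$ makes its eigenspace one-dimensional, so the Friedland eigenvector~$V$ satisfies either $V\in K^\circ$ or $-V\in K^\circ$. Replacing $V$ by $-V$ and $\epsilon_n$ by $-\epsilon_n$ if necessary (which preserves the Friedland convergence), I may assume $V/\|V\|\in K^\circ$. Then the limit point $v:=\operatorname{sign}(\trsp{W}U_0)\,V/\|V\|$ lies in the open set $K^\circ\cup(-K^\circ)$.

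To conclude, I use openness: there is a neighborhood of~$v$ contained in $K^\circ\cup(-K^\circ)$, and for all $n$ larger than some~$m$ the vector $\epsilon_n U_n/\|U_n\|$ lies in this neighborhood. Since $K^\circ\cup(-K^\circ)$ is symmetric under $x\mapsto -x$, the factor $\epsilon_n$ is immaterial, so $U_n/\|U_n\|$ itself lies in $K^\circ\cup(-K^\circ)$ for $n\ge m$, which gives $U_n\in K$ or $U_n\in -K$ as required.

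The point I expect to be the main subtlety is not technical but structural: the sign sequence~$(\epsilon_n)$ supplied by Friedland's theorem need not converge, so one cannot hope that $U_n$ stabilizes in a single one of~$K$ and~$-K$, only in their union. The dichotomy in the conclusion is precisely tailored to this oscillation, and the argument relies on the symmetry of $K^\circ\cup(-K^\circ)$ to remain robust to it.
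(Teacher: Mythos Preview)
Your proof is correct and follows essentially the same route as the paper's: invoke \cref{thm:Vandergraft} to place $V$ (or $-V$) in $K^\circ$, use \cref{thm:Friedland} and genericity to get $\epsilon_n U_n/\|U_n\|\to cV$ for some $c\neq 0$, and conclude from openness of $K^\circ\cup(-K^\circ)$. You simply spell out the normalization step from $\|P_n\|$ to $\|U_n\|$ and the symmetry argument handling the signs $\epsilon_n$ more explicitly than the paper does.
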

\begin{proof}By
\cref{thm:Vandergraft}, one of~$V,-V$ is in
$K^\circ$. With $U_0$ generic, \cref{thm:Friedland} implies that
there exists a sequence $(\epsilon_n)$ in $\{-1,1\}^\mathbb
N$ and $c\neq0$ such that
\[
 \lim_{n\rightarrow\infty}\epsilon_n\frac{U_n }{\|U_n\|}=cv.\]
As this limit is in the interior of either~$K$ or~$-K$, the conclusion
follows.
\end{proof}

\subsection{Proof of \cref{theorem:main} (termination of the algorithm)}
The hypotheses of \cref{theorem:main} are a special case of those
of Friedland's theorem
(\cref{thm:Friedland}), so that for generic initial conditions, the direction of $U_n$
tends towards that of the eigenvector corresponding to the unique dominant eigenvalue $\lambda$ of $A$, which, being unique, is real. 

By \cref{pro:ConeExistence},
there exists a positive cone $K \subset \mathbb{R}^d_{\ge0}$ that is contracted by $A$. This cone is computed in Step~1 using  Vandergraft's construction. In Step 2, by \cref{lem:Contraction}, the convergence of $A(n)$ to $A$ guarantees the existence of an index $m$ such that $A(n)K \subset K$ for all $n \geq m$.
Finally, the termination of Steps~3 and~4 for generic initial conditions is a consequence of \cref{coro:Friedland}.

\subsection{Proof of \cref{thm:decidability}}
 \begin{algorithm}[t]
\caption{Positivity algorithm for scalar recurrences}
\label{algorithm:PP-scalar}
\KwIn{ A recurrence of the form \eqref{rec} with $p_i\in\mathbb Q[n]$ for all~$i$ and $0\not\in p_0p_d(\mathbb N)$;
 initial conditions~$(u_0,\dots,u_{d-1})\in\mathbb Q^d$.}
\emph{It is assumed that the recurrence is of Poincaré type and has a unique simple dominant eigenvalue~$\lambda$.}\\
\KwOut{One of Positive, Non-Positive}

0.\hspace*{0.2cm}{\bf if }{$\lambda>0$ }{\bf then }\\
1.\hspace{0.5cm}Use \cref{algorithm:PP} with $A(n)$ from \cref{eq:companion} and $U_0=(u_0,\dots,u_{d-1})$;\\
2.\hspace*{0.2 cm}{\bf else} \hspace*{0.3 cm}{\bf for }$i=0,\dots$ {\bf do }{\bf if }{$u_i\not\ge0$ }{\bf then }{\bf return} Non-positive\\
\end{algorithm}
In the case of scalar recurrences like \cref{rec}, the reduction to \cref{algorithm:PP} is made by \cref{algorithm:PP-scalar}.

The recurrence operator \( A(n) \) associated with the recurrence by \cref{eq:companion} is defined for all~$n$ thanks to the hypothesis~$0\not\in p_d(\mathbb N)$. It is assumed to be of Poincaré type with a limit matrix \( A \) that has a single dominant eigenvalue~$\lambda$ which is nonzero since~$A\neq0$. A corresponding eigenvector is~$V_\lambda=(1,\lambda,\lambda^2,\dots,\lambda^{d-1})$, which is positive when~$\lambda>0$.

Applicability of \cref{theorem:main} requires that $A(n)$ be
invertible for all~$n$ and that is a consequence of the hypothesis
that $p_0 (n)$ does not have zeros in~$\mathbb N$. The case when $\lambda>0$ is handled by Step~1, whose correctness follows from \cref{theorem:main}.

If $\lambda<0$, then $-A$ has a positive dominant eigenvalue with the eigenvector $V_{\lambda}$. Thus by \cref{pro:ConeExistence}, it contracts a cone~$K$ all of whose elements have coordinates with alternating signs. Then by \cref{coro:Friedland}, for generic~$U_0$ and sufficiently large $n$, $U_n$  has at least one negative coordinate.
This is detected in Step~0 of \cref{algorithm:PP-scalar} and a corresponding index where~$u_n<0$ is computed in Step~2, which therefore terminates for generic initial conditions.

\section{Constructions of Contracted Cones}\label{section:constructions}
We consider a matrix \( A \) with
characteristic polynomial
\begin{equation}\label{eq:charpoly}
\chi(X) = \prod_{i=1}^k (X - \lambda_i)^{m_i},\quad m_1=1. \end{equation}
It is assumed to have a unique simple dominant eigenvalue
$\lambda_1>0$, i.e., $$\lambda_1>| \lambda_2|\geq |\lambda_3|\geq \dots \geq |\lambda_k|.$$

There are several ways to construct cones contracted by~$A$ that lead to distinct cones with different properties for the algorithm. 
We describe two such constructions. The first one is Vandergraft's original construction. It leads to cones that are not polyhedral in general, which makes operating with them expensive. The second one is a variant that produces polyhedral cones.

\subsection{Vandergraft's Construction}\label{subsec:vandergraft}
We recall Vandergraft's construction of a contracted
cone
satisfying \cref{thm:Vandergraft}, using the eigenstructure of \( A
\).

Fix \( \varepsilon \) such that \( \lambda_1 - |\lambda_2| > 
\varepsilon > 0 \) and find a basis~$(V_{i,j})$ of $\mathbb R^d$ that
satisfies
\begin{equation}\label{Basis}
\begin{aligned}
     A V_{i,1} &= \lambda_i V_{i,1},&\quad& 1\le i\le k \\
     A V_{i,j} &= \lambda_i V_{i,j} + \varepsilon V_{i,j-1}, &\quad& 1\le i\le k,\ 2\le j\le m_i.
\end{aligned}
\end{equation}
In other words, in this basis, the matrix $A$ has a block Jordan form
\[
\begin{bmatrix}
    \lambda_1 & 0 & \cdots & 0 \\
    0 & J_{2,\varepsilon} & \cdots & 0 \\
    \vdots & \vdots & \ddots & \vdots \\
    0 & 0 & \cdots & J_{k,\varepsilon}
\end{bmatrix},\qquad
J_{i,\varepsilon} =
\begin{bmatrix}
    \lambda_i & \varepsilon & 0 & \cdots & 0 \\
    0 & \lambda_i & \varepsilon & \cdots & 0 \\
    0 & 0 & \lambda_i & \ddots & \vdots \\
    \vdots & \vdots & \ddots & \ddots & \varepsilon \\
    0 & 0 & \cdots & 0 & \lambda_i
\end{bmatrix},\quad  2\le i\le k.
\]
The cone \( K(V_{i,j}) \) is then defined as:
\begin{multline}\label{eq:cone-Vandergraft}
K(V_{i,j}) = \Bigl\{ a_{1,1} V_{1,1} + \sum_{i=2}^{k} \sum_{j=1}^{m_i} a_{i,j} V_{i,j}\in\mathbb R^d  \Bigm\vert a_{1,1} \geq |a_{i,j}|\text{ for all $i,j$},\\
a_{i,j} = \bar{a}_{p,q} \text{ if } V_{i,j} = \bar{V}_{p,q}. \Bigr\}
\end{multline}
It is easy to check that it is contracted by \( A \) \cite{Vandergraft1968}.

\subsection{Rescaling}
Step~1 of 
\cref{algorithm:PP} is entered with a matrix~$A$ that has a
positive eigenvector associated to~$\lambda_1$. Taking an arbitrary
such vector as~$V_{11}$ in Vandergraft's construction is not
sufficient to produce a cone in~$\mathbb R_ {\ge0}^d$.
However, this can always be achieved after rescaling
\(V_ {1,1}\).

\begin{figure}
\centering
\includegraphics[height=5cm]{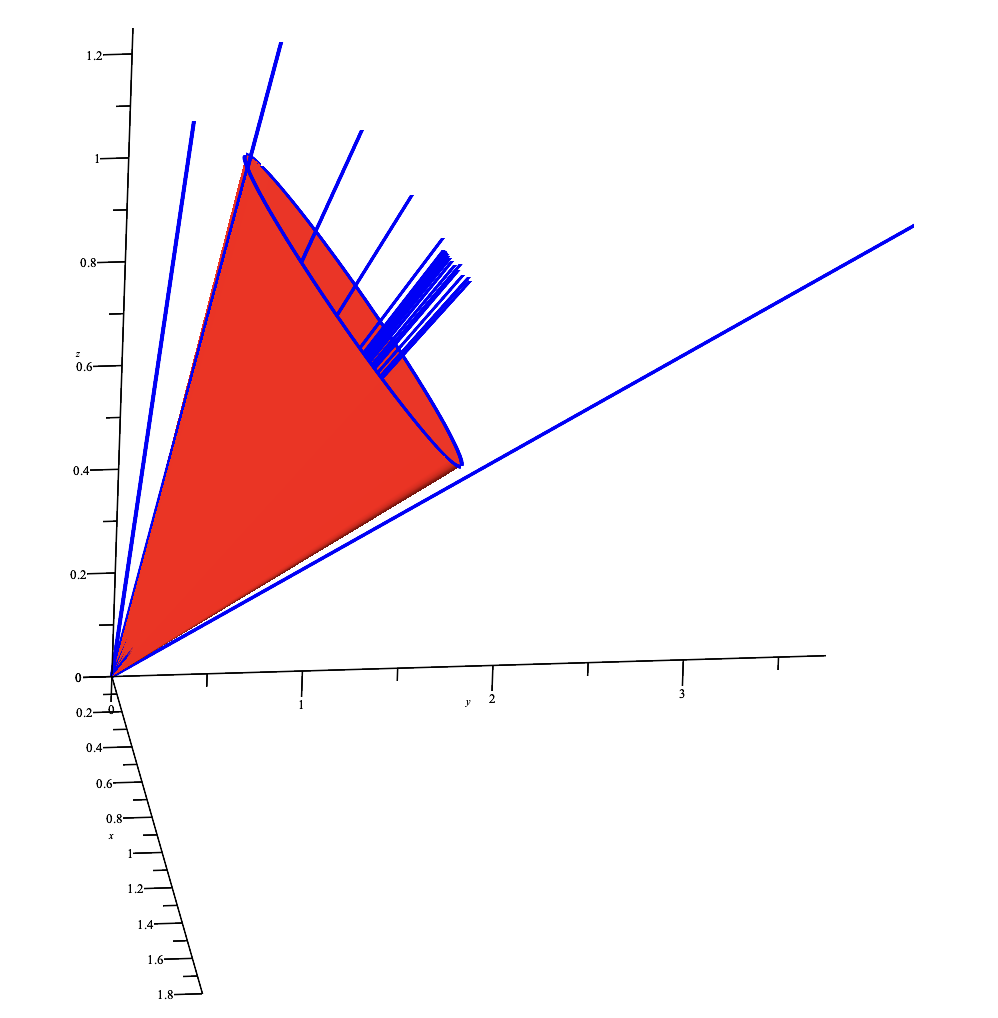}
\caption{Cone from \cref{example:ConeVandergraft} for the recurrence of \cref{example:3}.}
\label{fig:nonpolyhedral}
\end{figure}

\begin{example}\label{example:ConeVandergraft}
Let \( A \) be the limit matrix of the recurrence operator associated with the recurrence in \cref{example:3}. The matrix \( A \) is the companion matrix of the characteristic polynomial
\[
\chi(X) = (X - 1)\left(X - \frac{2+i}4\right)\left(X - \frac{2-i}{4}\right).
\]
Thus the eigenvalues are $\lambda_1=1,\lambda_2=(2+i)/4,\lambda_3=
(2-i)/4$.
Taking the vectors~$V_{i1}=(1,\lambda_i,\lambda_i^2)$
and rescaling~$V_{11}$ by a factor~2 gives a positive cone \( K \) that is the set of all positive scalar multiples of the vectors
\[
\mathbf{v}(a, b) = \left( 2 + 2a, 2 + a - \frac{b}{2}, 2 + \frac{3a}{8} - \frac{b}{2} \right),\qquad a^2 + b^2 \leq 1.
\]
The condition on the parameters \( a \) and \( b \) is not linear,
leading to a non-polyhedral cone depicted in 
\cref{fig:nonpolyhedral}.
\end{example}

\subsection{Polyhedral contracted cones}\label{sec:polyhedral-cones}
As will be apparent in the next section, having a polyhedral cone rather than a general one simplifies greatly the computations in the algorithms. As a special case of a more general result, Tam and Schneider have shown that in the conditions of Vandergraft's theorem (\cref{thm:Vandergraft}), there exists a contracted cone~$K$ that is polyhedral~\cite[Thm.~7.9]{TamSchneider1994}, for which they give an explicit construction~\cite[Proof of Lemma~7.5]{TamSchneider1994}. In the notation of \cref{subsec:vandergraft}, their construction uses the convex hull of the iterates of $A/\lambda_1$ restricted to the space generated by the $V_{i,j}$ where~$(i,j)\neq(1,1)$. As the spectral radius of this restriction is smaller than~1, this convex hull is generated by a finite set of vectors, to which $V_{1,1}$ is then added. That construction has been refined by Valcher and Farina~\cite{ValcherFarina2000}, by exploiting more closely the Jordan form of~$A$.

\begin{example}
For the matrix of \cref{example:ConeVandergraft}, the polyhedral cone constructed by the method of Valcher and Farina is given in \cref{fig:valcher-farina}. While polyhedral, in this and similar examples that we tried, this cone is not very well suited to our method, as it tends to have a large number of faces.
\begin{figure}[h]
    \centering
    \includegraphics[height=4cm]{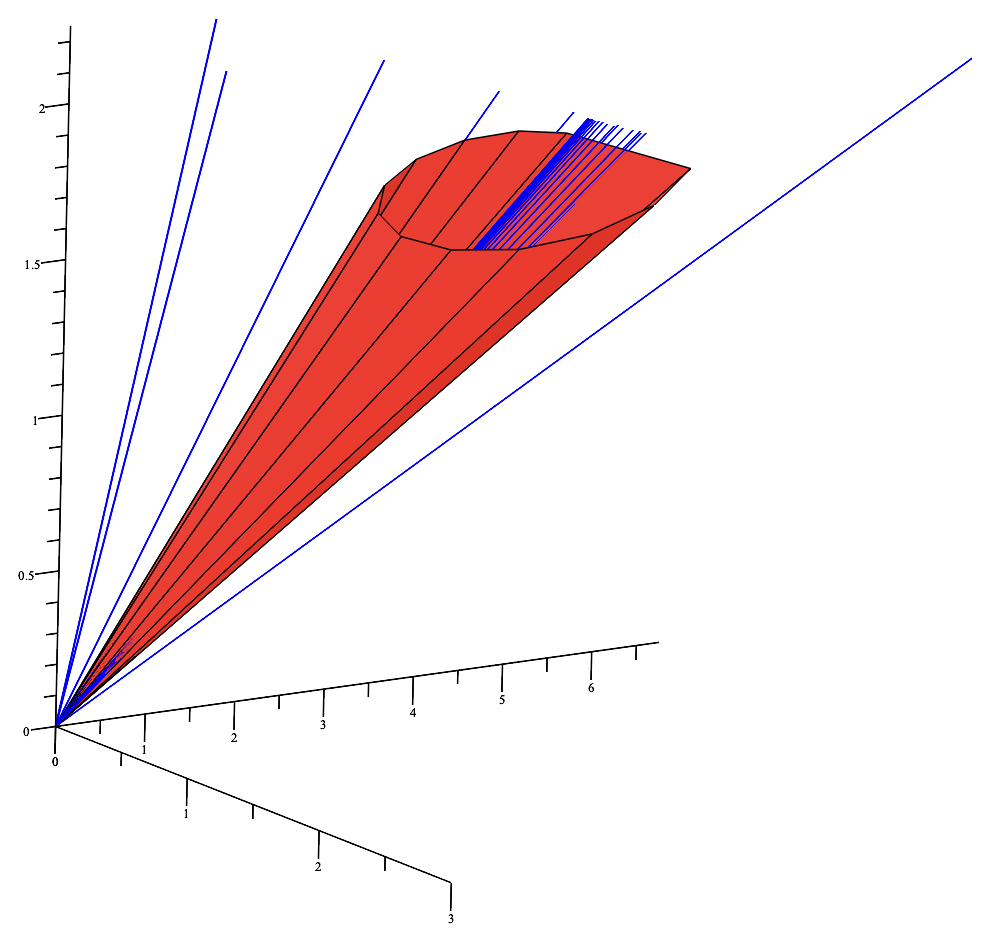}
    \caption{The first values of $U_n=(u_n,u_{n+1},u_{n+2})$ (in blue) in \cref{example:3}, together with the corresponding  polyhedral cone (in red) with its extremal vectors (in black)}\label{fig:valcher-farina}
\end{figure}
\end{example} 

We now propose a modification of Vandergraft's construction that works well in practice by producing polyhedral cones with a small number of faces. 

We start with the simplest case where cones for the 1-norm can be used.
We write $\|z\|_1:=|\Re z|+|\Im z|$ for the 1-norm on complex numbers considered as a real vector space. It satisfies the triangular inequality and is a sub-multiplicative norm ($\|yz\|_1\le \|y\|_1\|z\|_1$ for all complex numbers $y,z$).
\begin{proposition}\label{prop:regularpoly2}
Let \( A \) be a real matrix with a simple dominant eigenvalue \( \lambda_1 > 0 \). Assume that for $i\neq1$, the eigenvalues $\lambda_i$ of \( A \) satisfy the inequality $\|\lambda_i\|_1<\lambda_1$.
Let $V_{i,j}$ be a basis of $\mathbb R^d$ satisfying \cref{Basis},
with \(0<\varepsilon< \lambda_1 -\|\lambda_i\|_1 \) for all $i\neq1$. Then, the polyhedral cone
\[
L(V_{i,j}) = \Bigl\{ a_{1,1} V_{1,1} + \sum_{i=2}^{k} \sum_{j=1}^{m_i} a_{i,j} V_{i,j}\in\mathbb R^d \Bigm\vert a_{1,1} \geq \|a_{i,j}\|_1,\ a_{i,j} = \bar{a}_{p,q} \text{ if } V_{i,j} = \bar{V}_{p,q}\Bigr\}
\]
is contracted by \( A \).
\end{proposition}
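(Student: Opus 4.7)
The plan is to take an arbitrary nonzero $x\in L(V_{i,j})$, compute $Ax$ in the basis $(V_{i,j})$, and verify that its new coordinates satisfy the defining inequalities of $L(V_{i,j})$ \emph{strictly}; this is exactly what $Ax\in L(V_{i,j})^\circ$ means.

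Writing $x=a_{1,1}V_{1,1}+\sum_{(i,j)\neq(1,1)}a_{i,j}V_{i,j}$, the defining inequalities $a_{1,1}\geq\|a_{i,j}\|_1$ force $a_{1,1}$ to be a nonnegative real, and $a_{1,1}=0$ would make every coefficient vanish; hence $x\neq 0$ gives $a_{1,1}>0$. Using \eqref{Basis}, the new coordinates are $b_{1,1}=\lambda_1 a_{1,1}$ and $b_{i,j}=\lambda_i a_{i,j}+\varepsilon a_{i,j+1}$, with the convention $a_{i,m_i+1}=0$. The proof then reduces to two verifications: the strict inequality $b_{1,1}>\|b_{i,j}\|_1$ for all $(i,j)\neq(1,1)$, and the preservation of the conjugation structure.

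For the inequality, I would use the triangular inequality together with submultiplicativity $\|yz\|_1\leq\|y\|_1\,\|z\|_1$ of the 1-norm on $\mathbb{C}\simeq\mathbb{R}^2$, combined with the bounds $\|a_{i,j}\|_1,\|a_{i,j+1}\|_1\leq a_{1,1}$, to get
\[
\|b_{i,j}\|_1\leq\|\lambda_i\|_1\,\|a_{i,j}\|_1+\varepsilon\,\|a_{i,j+1}\|_1\leq(\|\lambda_i\|_1+\varepsilon)\,a_{1,1}.
\]
The hypothesis $\varepsilon<\lambda_1-\|\lambda_i\|_1$ then yields $\|b_{i,j}\|_1<\lambda_1 a_{1,1}=b_{1,1}$, as required. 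Preservation of conjugation is automatic because $A$ is real: whenever $V_{i,j}=\bar V_{p,q}$ one has $\lambda_i=\bar\lambda_p$ and $V_{i,j+1}=\bar V_{p,q+1}$ in the conjugate Jordan chain, so $b_{i,j}=\bar b_{p,q}$ follows from $a_{i,j}=\bar a_{p,q}$ and $a_{i,j+1}=\bar a_{p,q+1}$.

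The hard part is not this computation, which is mechanical, but recognising the specific role of the 1-norm. Submultiplicativity of $\|\cdot\|_1$ (which fails for a generic norm on $\mathbb{C}\simeq\mathbb{R}^2$) is precisely what produces the linear bound above, and the piecewise-linear shape of its unit disk in $\mathbb{C}$ turns each constraint $a_{1,1}\geq\|a_{i,j}\|_1$ into a finite conjunction of linear inequalities. This is what makes $L(V_{i,j})$ polyhedral, in contrast to Vandergraft's original cone \eqref{eq:cone-Vandergraft}, which uses the ordinary modulus and has curved faces.
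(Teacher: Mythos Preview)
Your proof is correct and follows essentially the same route as the paper's: write $Ax$ in the basis $(V_{i,j})$, then bound $\|b_{i,j}\|_1$ using the triangle inequality and submultiplicativity of $\|\cdot\|_1$ together with $\|a_{i,j}\|_1\le a_{1,1}$ and $\varepsilon<\lambda_1-\|\lambda_i\|_1$, obtaining the strict inequality $\|b_{i,j}\|_1<\lambda_1 a_{1,1}=b_{1,1}$. Your explicit remark that $x\neq0$ forces $a_{1,1}>0$ (needed for the strictness) and your discussion of conjugate preservation and polyhedrality are all present in the paper's argument as well, stated more tersely.
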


\begin{proof}
The cone $L(V_{i,j})$ is a polyhedral cone since it is described by a
finite number of linear inequalities and equalities, which makes
it the intersection of finitely many half spaces in $\mathbb{R}^d$.

Let \( W = \sum a_{i,j} V_{i,j} \) be a vector in the cone \( L(V_{i,j}) \). The vector \( AW \) is
\[
AW = \sum \alpha_{i,j} V_{i,j},
\]
where, by \cref{Basis}, the coefficients are 
\[
\begin{aligned}
\alpha_{i,j} &= a_{i,j} \lambda_i + \varepsilon a_{i,j+1}, \quad \text{for } j < m_i, \\
\alpha_{i,m_i} &= a_{i,m_i} \lambda_i.
\end{aligned}
\]
The property on conjugate coefficients of conjugate vectors follows by taking the conjugates in each case. 
For $j<m_i$, the properties of the norm give
\[\|\alpha_{i,j}\|_1\le \|a_{i,j}\|_1 \|\lambda_i\|_1 + \varepsilon \|a_{i,j+1}\|_1
< a_{1,1}(\lambda_1-\varepsilon)+\varepsilon a_{1,1}=\lambda_1a_{1,1}=\alpha_{1,1},\]
where the last equality comes from the fact that $\lambda_1$ is simple. The case $j=m_i$ gives the same derivation with $\varepsilon$ replaced by~0 and completes the proof that the cone $L(V_{i,j})$ is contracted by the matrix~$A$.
\end{proof}
\begin{example}
\begin{figure}
    \centering
    \includegraphics[width=0.25\linewidth]{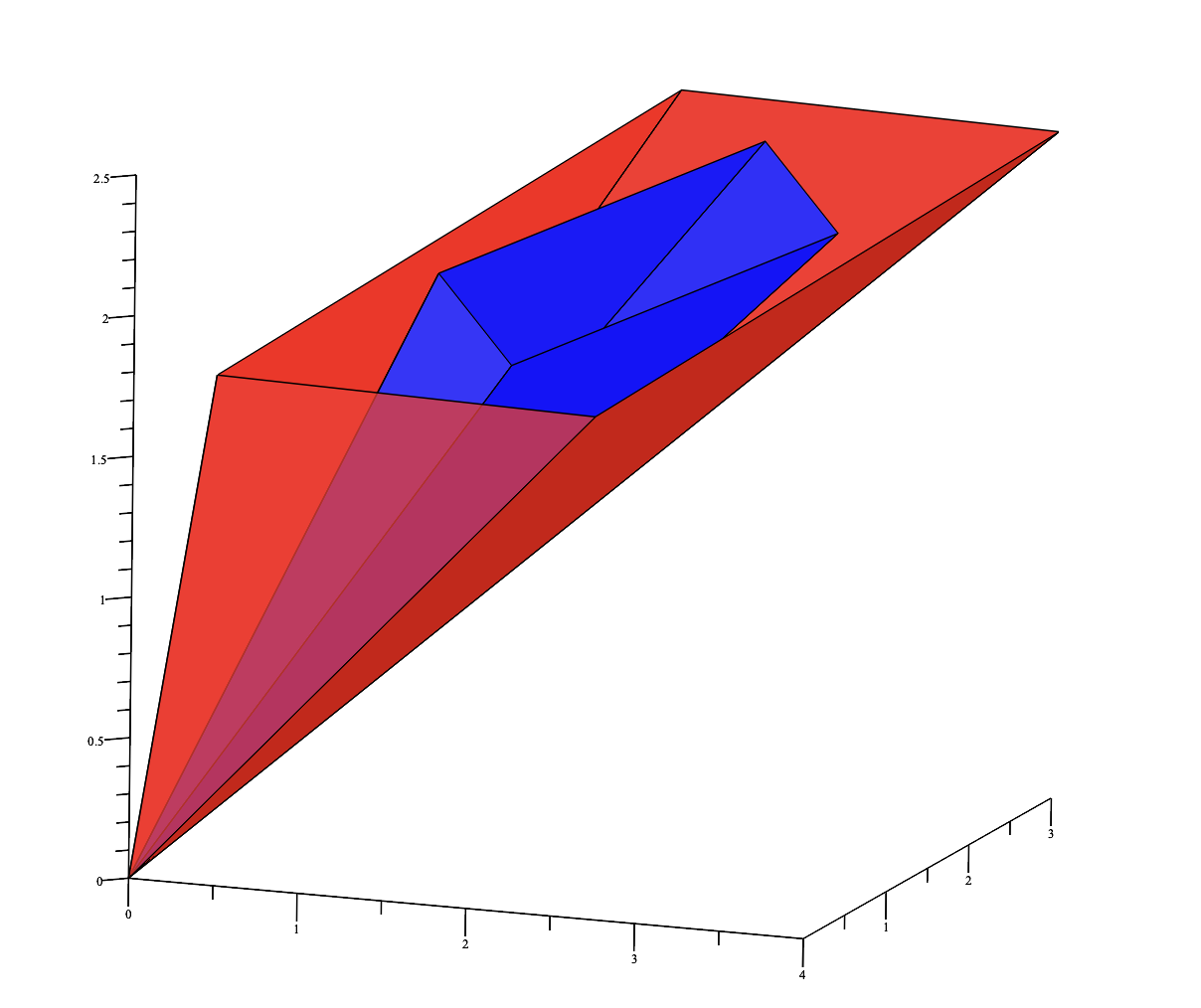}
    \caption{Polyhedral cone (red) with its image by the matrix $A$ (blue)}\label{fig:losange}
\end{figure}
The regular polyhedron given by \cref{prop:regularpoly} for the matrix
\(A \) in \cref{example:ConeVandergraft} is given in \cref{fig:losange}.
\end{example}

For the general case, let $s\in\mathbb N_{>0}$, let $\omega_{s}=\exp(i\pi/s)$ and consider the polygon
\[\mathcal P_s:=\Bigl\{\sum_{i=0}^{2s-1}a_i\omega_{s}^i\Bigm\vert a_i\ge0,a_0+\dots+a_{2s-1}=1\Bigr\},\]
with vertices at the $2s$th roots of~1. The Minkowski functional associated to it is
\begin{equation}\label{eq:normps}
\|z\|_{\mathcal P_s}:=\inf\{r>0\mid z\in r\mathcal P_s\}.
\end{equation}
For~$s=1$ this is the absolute value in~$\mathbb R$. For $s\ge2$, it is a norm on~$\mathbb C$ seen as a real vector space. It is sub-multiplicative: the product of two elements of~$\mathcal P_s$ is a convex linear combination of the $\omega_{s}^i$. The case $s=2$ corresponds to the 1-norm of the previous result. The general version of that result is obtained by merely changing the norm.
\begin{proposition}\label{prop:regularpoly}
Let \( A \) be a real matrix with a simple dominant eigenvalue \( \lambda_1 > 0 \). Then for $i\neq1$, there exists $s_i$ such that  the eigenvalue $\lambda_i$ of \( A \) satisfies the inequality $\|\lambda_i\|_{\mathcal P_{s_i}}<\lambda_1$.
Let $V_{i,j}$ be a basis of $\mathbb R^d$ satisfying \cref{Basis},
with for each $i\neq1$, \( 0<\varepsilon_i< \lambda_1 -\|\lambda_i\|_{\mathcal P_{s_i}} \). Then, the polyhedral cone
\begin{multline}\label{eq:cone-polyhedral}
L(V_{i,j}) = \Bigl\{ a_{1,1} V_{1,1} + \sum_{i=2}^{k} \sum_{j=1}^{m_i} a_{i,j} V_{i,j} \Bigm\vert a_{1,1} \geq \|a_{i,j}\|_{\mathcal P_{s_i}}\text{ for all $i,j$},\\
 a_{i,j} = \bar{a}_{p,q} \text{ if } V_{i,j} = \bar{V}_{p,q}, \Bigr\}
\end{multline}
is contracted by \( A \). 
\end{proposition}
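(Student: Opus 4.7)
The plan is to adapt the proof of \cref{prop:regularpoly2} by replacing the $1$-norm with the Minkowski functional $\|\cdot\|_{\mathcal P_s}$. The argument splits naturally into three pieces: the existence of the indices $s_i$, the polyhedral nature of $L(V_{i,j})$, and the contraction property. The main obstacle I anticipate lies in the first step: one must verify that $\|\cdot\|_{\mathcal P_s}\to|\cdot|$ pointwise with enough control to handle all non-dominant eigenvalues at once. The remaining two steps will transcribe almost verbatim from \cref{prop:regularpoly2}, since every property of the $1$-norm used there — triangular inequality, sub-multiplicativity, and invariance under complex conjugation — also holds for $\|\cdot\|_{\mathcal P_s}$.

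For the existence of $s_i$, I would exploit the geometry of the regular $2s$-gon: $\mathcal P_s$ is inscribed in the closed unit disk with vertices at the $2s$th roots of unity, and its apothem equals $\cos(\pi/(2s))$. Consequently, for any nonzero $z$, the ratio $\|z\|_{\mathcal P_s}/|z|$ lies in the interval $[1,\,1/\cos(\pi/(2s))]$ and therefore tends to $1$ as $s\to\infty$. Because $\lambda_1$ is the unique simple dominant eigenvalue, $|\lambda_i|<\lambda_1$ for $i\neq 1$, and choosing $s_i$ large enough guarantees $\|\lambda_i\|_{\mathcal P_{s_i}}<\lambda_1$ individually for each $i$.

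Next I would verify that $L(V_{i,j})$ is polyhedral. Each constraint $a_{1,1}\ge \|a_{i,j}\|_{\mathcal P_{s_i}}$ amounts to the membership $a_{i,j}\in a_{1,1}\mathcal P_{s_i}$, which is a finite intersection of closed half-planes in $\mathbb R^2\simeq\mathbb C$, one per edge of the $2s_i$-gon. Together with the linear equalities arising from the conjugation constraints, $L(V_{i,j})$ is cut out by finitely many linear conditions, hence it is a polyhedral cone.

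For the contraction property I would reproduce the calculation from \cref{prop:regularpoly2}. Given $W=\sum a_{i,j}V_{i,j}\in L(V_{i,j})$, the relations \cref{Basis} give $\alpha_{1,1}=\lambda_1 a_{1,1}$ and, for $i\neq 1$, $\alpha_{i,j}=\lambda_i a_{i,j}+\varepsilon_i a_{i,j+1}$ as the coefficients of $AW$ on the same basis, the last summand being dropped when $j=m_i$. Combining the triangular inequality with the sub-multiplicativity of $\|\cdot\|_{\mathcal P_{s_i}}$ (which follows, as noted in the text, from the inclusion $\mathcal P_s\cdot\mathcal P_s\subset\mathcal P_s$) yields
\[\|\alpha_{i,j}\|_{\mathcal P_{s_i}}\le \|\lambda_i\|_{\mathcal P_{s_i}}\,a_{1,1}+\varepsilon_i a_{1,1}<\lambda_1 a_{1,1}=\alpha_{1,1},\]
where the strict inequality uses the choice $\varepsilon_i<\lambda_1-\|\lambda_i\|_{\mathcal P_{s_i}}$. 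The conjugation equalities are preserved because conjugate eigenvectors correspond to conjugate eigenvalues and $\mathcal P_s$ is invariant under complex conjugation. Hence $AW$ lies strictly in the interior of $L(V_{i,j})$, which establishes that the cone is contracted by $A$.
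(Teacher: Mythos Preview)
Your proposal is correct and follows essentially the same approach as the paper: the existence of $s_i$ comes from the convergence of $\mathcal P_s$ to the unit disk (you quantify this via the apothem $\cos(\pi/(2s))$, whereas the paper just states the convergence), and the contraction argument is, as you say, a verbatim transcription of the proof of \cref{prop:regularpoly2} with the $1$-norm replaced by $\|\cdot\|_{\mathcal P_{s_i}}$. Your write-up is more detailed than the paper's two-sentence proof, but the underlying argument is identical.
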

\begin{proof}
The existence of~$s_i$ comes from the fact that $\mathcal P_s$ tends to the unit circle as $s\rightarrow\infty$.
The proof that the cone is contracted is exactly the same as before, up to the change of sub-multiplicative norm.
\end{proof}
In practice, we have observed that the case $s=2$ is often sufficient. In particular, for the large recurrences of \cref{subsec:GRZ} coming from the Gillis-Reznick-Zeilberger sequence, only that case is needed, leading to fast positivity proofs. 

\subsection{Approximate Cones}\label{sec:approx-cone}
The cones $K(V_{i,j})$ and $L(V_{i,j})$ constructed in 
\cref{eq:cone-Vandergraft,eq:cone-polyhedral} are expressed in terms
of the eigenvectors of the matrix~$A$, whose coordinates can be
taken in a finite extension of~$\mathbb Q$. Computations can be
accelerated by selecting a nearby cone whose vertices have simpler coordinates. That this is possible follows from the following.
\begin{proposition}
Let \( A \) be a real matrix, and let \( V_{i,j} \) denote a basis
that satisfies \cref{Basis}. Assume that the cone \( K = K(V_{i,j}) \)
defined
by \eqref{eq:cone-Vandergraft} is contracted by \( A \). Then, there exists \( \eta > 0 \) such that if \( \tilde{V}_{i,j} \) is another basis satisfying \( \|\tilde{V}_{i,j} - V_{i,j}\| < \eta \) for all \( i,j \), the cone \(\tilde K= K(\tilde{V}_{i,j}) \) is also contracted by \( A \). Moreover, the same holds for the cone \( L(V_{i,j}) \) and its approximation \( L(\tilde{V}_{i,j}) \).
\end{proposition}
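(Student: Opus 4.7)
The plan is to reduce the perturbation of the basis to a perturbation of the matrix, and then reuse the compactness argument underlying \cref{lem:Contraction}.

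First, I would introduce the linear map $\phi : \mathbb{R}^d \to \mathbb{R}^d$ defined by $\phi(V_{i,j}) = \tilde{V}_{i,j}$. For $\eta$ sufficiently small, continuity of the determinant guarantees that $\phi$ is a linear isomorphism. Because the defining constraints of $K(V_{i,j})$ in \cref{eq:cone-Vandergraft} and of $L(V_{i,j})$ in \cref{eq:cone-polyhedral} are expressed purely in terms of the coordinates $a_{i,j}$, with no reference to the ambient basis, one has $K(\tilde{V}_{i,j}) = \phi(K(V_{i,j}))$ and similarly $L(\tilde{V}_{i,j}) = \phi(L(V_{i,j}))$. Since $\phi$ is a linear homeomorphism and therefore maps interiors to interiors, the property that $A$ contracts $K(\tilde{V}_{i,j})$ rewrites as the property that the conjugated matrix $M := \phi^{-1} A \phi$ contracts $K := K(V_{i,j})$, and analogously for $L$.

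Next, I would observe that the map sending the perturbed basis to $M$ is continuous in a neighbourhood of the point $\tilde{V}_{i,j} = V_{i,j}$ (where $M = A$), since the entries of $\phi^{-1}$ depend continuously on those of $\phi$ by Cramer's rule. It therefore suffices to prove a matrix-perturbation statement: there exists $\delta > 0$ such that every real matrix $M$ with $\|M - A\| < \delta$ in operator norm contracts $K$. This will follow by essentially replaying the argument used in \cref{lem:Contraction}: since $A$ contracts $K$, the compact set $A(K \cap S_{d-1})$ sits inside the open set $K^\circ$, so by compactness there exists a uniform radius $r > 0$ such that each open $r$-ball around a point $Ax$ (with $x \in K \cap S_{d-1}$) is contained in $K^\circ$. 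Any $M$ that is $r$-close to $A$ in operator norm then sends each such $x$ into that ball, hence into $K^\circ$, and the contraction of $K$ by $M$ follows by homogeneity. The argument for $L$ is identical; neither the specific polytope $\mathcal{P}_s$ nor the Jordan structure plays any role.

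The only mild technical point to watch is that the defining constraints of $K$ and $L$ include the pairing condition $a_{i,j} = \bar{a}_{p,q}$ whenever $V_{i,j} = \bar{V}_{p,q}$, so that these cones are genuinely real subsets of $\mathbb{R}^d$. To preserve this after perturbation, the basis $\tilde{V}_{i,j}$ must satisfy the analogous pairing, which makes $\phi$ a real linear map and keeps the argument above valid. No serious obstacle remains: once the reduction to a perturbation of the matrix $A$ is in place, the proof essentially reduces to the compactness argument already in the paper.
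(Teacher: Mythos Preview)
Your proposal is correct and, in fact, somewhat cleaner than the paper's argument. Both proofs introduce the linear isomorphism $\phi$ sending $V_{i,j}$ to $\tilde V_{i,j}$ and use that $\tilde K=\phi(K)$. From there the paper works directly in the ambient space: for $\tilde W\in\tilde K$ it shows first that $A\tilde W$ is close to $A\psi(\tilde W)\in K^\circ$, and then that $A\tilde W$ is close to a point of $\phi(K^\circ)=\tilde K^\circ$, using two separate compactness steps with explicit $\epsilon,\epsilon'$. Your route instead observes that $A$ contracting $\tilde K$ is equivalent to $M=\phi^{-1}A\phi$ contracting the \emph{fixed} cone $K$, and then invokes a single matrix-perturbation lemma (the exact analogue of \cref{lem:Contraction} with $A(n)$ replaced by $M$) once you know $M\to A$ as $\eta\to0$. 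This buys you a shorter argument with one compactness step instead of two, and it makes transparent that nothing about the specific shape of $K$ or $L$ matters beyond their being proper and contracted by $A$. The paper's version, in exchange, keeps all estimates explicit in terms of $\eta$ and the basis vectors, which is closer in spirit to how the precision is actually increased in the implementation. Your remark about preserving the conjugate-pairing $\tilde V_{i,j}=\overline{\tilde V_{p,q}}$ is well taken; the paper leaves this implicit but it is indeed needed for $\phi$ to be real and for $\phi(K)=\tilde K$ to hold.
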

\begin{proof}
By linearity of \( A \), it is sufficient to prove the result for the intersection $S_0$ of  \(\tilde K \) with the ball $B(0,1)$.

Consider the linear automorphism of $\mathbb R^d$ defined by
\[\phi:W=\sum{\alpha_{i,j}V_{i,j}}\mapsto\tilde W=\sum{\alpha_{i,j}\tilde V_{i,j}}\]
and let $\psi$ denote its inverse. 
The definitions of $K$ and $\tilde K$ imply that $\phi(K)=\tilde K$, $\psi(\tilde K)=K$, $\phi(K^\circ)=\tilde K^\circ$ and $\psi(\tilde K^\circ)=K^\circ$. 
Also, for any~$W$,
\begin{equation}\label{eq:normphi}
\|\phi(W)-W\|\le \sum|\alpha_{i,j}|\|\tilde V_{i,j}-V_{i,j}\|\le \eta\sum|\alpha_{i,j}|,
\end{equation}
and the same bound holds for $\|\psi(\tilde W)-\tilde W\|$.

For all $\tilde W\in S_0$, 
\[\|A\tilde W-A\psi(\tilde W)\|\le  \|A\|\eta C,\]
where $C$ is a finite upper bound on the sum of absolute values of the coordinates of the elements of the compact set $S_0$.
As $\psi(\tilde K)=K$ and $A$ contracts~$K$, the set $S_1:=A\psi(S_0)$ is a compact subset of~$K^\circ$. Thus, there exists $\epsilon>0$ such that $B(x,\epsilon)\subset K^\circ$ for all $x\in S_1$. It follows that if $\eta$ is such that $\|A\|\eta C<\epsilon$, then $A\tilde W\in K^\circ$.
Then, \cref{eq:normphi} implies
\[\|A\tilde W-\phi(A\tilde W)\|<C'\eta,\]
where $C'$ is a finite upper bound on the sum of absolute values of the coordinates of the elements of the compact set $AS_0$.
As $\phi(K^\circ)=\tilde K^\circ$, the set $S_2:=\phi(S_1)$ is a compact subset of $K^\circ$. Thus, there exists $\epsilon'>0$ such that 
$B(x,\epsilon')\subset\tilde K^\circ$ for all $x\in S_2$. It follows that if $\eta$ is such that $C'\eta<\epsilon'$, then $A\tilde W\in\tilde K^\circ$.

In summary, if $\eta$ is such that $\|A\|\eta C<\epsilon$ and $C'\eta<\epsilon'$, then $A (\tilde K\cap B(0,1))\subset \tilde K^\circ$, showing the desired contraction.

The same proof applies to $L$ instead of~$K$.
\end{proof}
Algorithmically, a rational approximation \( \tilde{V}_{i,j} \) of
the basis vectors \( V_{i,j} \) is computed up to a specified
precision. This precision is increased until the resulting cone
\(K (\tilde{V}_ {i,j}) \) (or $L(\tilde V_{i,j})$) generated by this approximation is  contracted by \( A \).

\section{Computations with Cones}
\subsection{Cone Membership}

The cones \( K( V_{i,j}) \) and \( L(V_{i,j}) \) are defined by the set of linear combinations of the vectors \( V_{i,j} \), subject to specific constraints based on the properties of the eigenvalues of the matrix \( A \). The use of both cones relies on checking whether a given vector can be expressed as a valid linear combination of the basis vectors \(V_{i,j} \) under these constraints. 

If \( T \) is the matrix whose columns are the vectors \( V_{i,j} \), the coefficients $a_{i,j}$ of a vector~$W\in\mathbb R^d$ in the basis $(V_{i,j})$ are given by $T^{-1}W$. Then checking membership in $K(V_{i,j})$ or $L(V_{i,j})$ amounts to checking the inequalities
\begin{equation}\label{eq:inside-cone}
\begin{cases}
|a_{i,j}|\le a_{1,1}&\text{ if }a_{i,j}\in\mathbb R,\\
a_{i,j}a_{p,q}\le a_{1,1}^2&\text{ if }V_{i,j}=\bar V_{p,q}\text{ for $K(V_{i,j})$},\\
\|a_{i,j}\|_{\mathcal P_{s_i}}\le a_{1,1}&\text{ for $L(V_{i,j})$}.
\end{cases}
\end{equation}
For $s>1$, the norm~$\|\cdot\|_{\mathcal P_s}$ from \cref{eq:normps} can be computed by
\begin{equation}\label{eq-norm}
\|z\|_{\mathcal P_s}=\Re\left(\frac{1+\omega_s^{-1}}{1+\cos\!\left(\frac\pi{s}\right)}\omega_s^{-m}z\right),\quad\text{with $m$ s.t. }\frac{m\pi}s\le\arg z\le\frac{(m+1)\pi}s.
\end{equation}
Similar inequalities are used for the approximate cones $K(\tilde V_
{i,j})$ and $L (\tilde{V}_{i,j})$.

\subsection{Extremal Vectors}
The maps $A$ and $A(n)$ are linear. Since the vectors in a proper cone \( K \) are nonnegative combinations of its extremal vectors, in order to check that the cone is contracted by one of these maps, it is sufficient to test that the images of the extremal vectors of $K$ belong to the cone.

In the four cases of interest here, Vandergraft's cone $K(V_{i,j})$ of \cref{eq:cone-Vandergraft}, the polyhedral cone $L(V_{i,j})$ of \cref{eq:cone-polyhedral} or their approximation of \cref{sec:approx-cone}, the vectors $V_{i,j}$ are given, together with the distinct complex numbers $\lambda_i$ and they satisfy the property that if $\lambda_p=\bar\lambda_i$ then $V_{p,j}=\bar V_{i,j}$ for $j=1,\dots,m_i$. Without loss of generality, the $\lambda_i$ are numbered so that $\lambda_1,\dots,\lambda_{v_r}$ are real,  $\lambda_{v_r+1},\dots,\lambda_{v_c}$ have positive imaginary part and $\lambda_{v_c+i}=\bar\lambda_{v_r+i}$ for $i=1,\dots,v_c$.

Up to normalization so that their coordinate on the vector~$V_{1,1}$ is~1, the extremal vectors of the cone \( K(V_{i,j}) \), denoted by \( \mathrm{Ext}(K) \), are given by
\[\label{Ext}
\mathrm{Ext}(K) = \Bigl\{ V_{1,1} + \sum_{i=2}^{v_r} \sum_{j=1}^{m_i} \epsilon_{i,j} V_{i,j} + 2
\sum_{i=v_r+1}^{v_c} \sum_{j=1}^{m_i} \Re(\alpha_{i,j}V_{i,j}) \Bigm\vert\epsilon_{i,j}\in\{\pm1\},  |\alpha_{i,j}|^2 = 1 \Bigr\}.
\]
\begin{multline*}
\mathrm{Ext}(L) = \Bigl\{ V_{1,1} + \sum_{i=2}^{v_r} \sum_{j=1}^{m_i} \epsilon_{i,j} V_{i,j} +2 \sum_{i=v_r+1}^{v_c} \sum_{j=1}^{m_i}\Re( \alpha_{i,j} V_{i,j})\Bigm\vert \\
\epsilon_{i,j}\in\{\pm1\}, \alpha_{i,j} \in \{1,\omega_{s_i},\dots,\omega_{s_i}^{2s_i-1}\} \Bigr\}.
\end{multline*}

\subsection{Cone Contraction}
Let $W$ be an extremal vector in the set $\operatorname{Ext}(K)$ or $\operatorname{Ext}(L)$.
In the basis~$(V_{i,j})$, its coordinates are
\[
\begin{cases}
1&\text{if $i=1$;}\\
\epsilon_{i,j}\in\{\pm1\}&\text{if $2\le i\le v_r$;}\\
\alpha_{i,j}&\text{for $v_{r}+1\le i\le v_c$;}\\
\bar \alpha_{i-v_c+v_r,j}&\text{for $v_{c}+1\le i\le 2v_c-v_r$},
\end{cases}
\]
with the constraints $|\alpha_{i,j}|^2=1$ for the cone~$K$ and $\alpha_{i,j}\in\{1,\dots,\omega_{s_i}^{2s_i-1}\}$ for the cone~$L$.

Let $T$ be the matrix whose columns contain the coordinates of the vectors of the basis~$(V_{i,j})$. The coordinates of the image by~$A$ of the extremal vector~$W$ on this basis are the entries of the vector
$T^{-1}ATW$.
Denoting these coordinates by~$a_{i,j}$, checking that the image belongs to the cone reduces to checking the inequalities~\eqref{eq:inside-cone}.

Let $\mathbb K$ be a field containing~$\mathbb Q$, the entries of $A$ and the real and imaginary parts of the coordinates of the vectors $V_{i,j}$. In particular, $\mathbb Q$ can be taken for~$\mathbb K$ when using the approximate cones of \cref{sec:approx-cone}. Let $\ba,\bb$ be the real and imaginary parts of the coordinates~$\alpha_{i,j}$ of $W$.

In the case of the cone~$K(V_{i,j})$, the inequalities \cref{eq:inside-cone} reduce to a finite set of inequalities between polynomials in~$\mathbb K[\ba,\bb]$, subject to $|\alpha_{i,j}|^2=1$, that form a set of nonlinear constraints on~$\ba,\bb$.
In the case of the cone~$L(V_{i,j})$, the situation is simpler in that no parameter is involved and the system gives a finite set of inequalities, with coefficients in an algebraic extension to accommodate the sines and cosines of \cref{eq-norm} when one of the $s_i$ is larger than~2.

\subsection{Stability Index}
Given a matrix $A(n)$ and a cone $K$ contracted by its limit matrix
$A$,  the second step of \cref{algorithm:PP}  computes a
stability index, i.e., an integer $m\in\mathbb{N}$ such that $A(n)K\subset K$ for all $n\geq
m$. In the notation of the previous section, the system of interest is given by the inequalities~\eqref{eq:inside-cone} applied to the coordinates~$a_{i,j}$ of the vector
$T^{-1}A(n)TW$, where $W$ runs over the extremal vectors of the cone.

 By assumption, $A(n)$ tends to~$A$ for which the cone is contracted, so that these inequalities and constraints are satisfied simultaneously for large enough~$n$. First, one can compute an integer~$m_0$ such that the denominator of~$A(n)$ has constant sign for $n\ge m_0$. Assuming $n\ge m_0$, multiplying the inequalities by this denominator gives a new set of inequalities that are \emph{polynomial} in~$n$ and can be written in the form~$p_i(n)\ge0$ for $p_i\in\mathbb K[\ba,\bb]$.
 
 This construction gives a set of polynomial inequalities and
 equalities that define semi-algebraic sets whose projection on the
 $n$-axis contain an interval of the form~$[m_1,\infty)$. Such a~$m_1$
 could be found by quantifier elimination. However, the optimal~$m_1$
 is not needed and computing it might be costly. Instead, we construct
 polynomials~$\tilde p_i(n)\in\mathbb Q[n]$ by computing lower bounds
 on each of the coefficients of the $p_i$ by interval analysis. As
 $\tilde p_i(n)\ge p_i(n)$ for all~$n\ge0$, it is then sufficient to
 find an upper bound on the largest real roots of these polynomials,
 which can be computed efficiently. A detailed example using this technique is given in \cref{sec:example}.

\section{Improvements for Scalar Recurrences}\label{sec:linrec}
When the matrix~$A(n)$ is a companion matrix as in 
\cref{eq:companion}, extra structure can be exploited in order to make
the algorithm more efficient.

\subsection{Larger cones}
In Step~1 of the algorithm, it is sufficient to have $K \subset 
\mathbb{R}^{d-1} \times \mathbb R_{\ge0}$, rather than requiring it to
be inside $\mathbb R_{\ge0}^d$. In this case, proving that $U_n = 
(u_n, u_{n+1},
\dots, u_{n+d-1}) \in K$ ensures that $u_{n+d-1} \geq 0$ for all $n$,
which implies the positivity of the entire sequence after checking
initial conditions. The cone being larger, the sequence enters it for smaller indices and fewer initial conditions have to be checked.

\subsection{Explicit basis}
If the roots~$\lambda_1,\dots,\lambda_k$ of the characteristic
polynomial are known, with $m_i$ the multiplicity of $\lambda_i$, then the
following vectors satisfy the condition of \cref{Basis}:
\begin{equation}\label{eq:basis}\begin{aligned}
    V_{i,1} &= (1, \lambda_i, \dots, \lambda_i^{d-1}),&& 1\le i\le k \\
    V_{i,j} &= \varepsilon^{j-1} \left( \binom{\ell}{j-1} \lambda_i^
    {\ell - j + 1}, \ \ell = 0, \dots, d-1 \right), && 1\le i\le k,\
    2\le j\le m_i.
\end{aligned}\end{equation}
In the special situation when $\lambda_k=0$, this is understood in
the limit of this expression as $\lambda_k\rightarrow0$, namely
\[V_{k,j}=\varepsilon^{j-1}(0,\dots,0,1,0,\dots,0),\]
with~1 in the $j$th position.

 \section{Detailed Example}\label{sec:example}
We illustrate the steps of the algorithm on the sequence \( (u_n^{(4)})\) from \cref{eq:GRZ} with both the cone from Vandergraft’s construction and its modified polyhedral analogue. The sequence is a solution of the recurrence relation:
\begin{multline*}
    \big(10616832n^6 + 138018816n^5 + 701374464n^4 + 1765380096n^3 +\\  2308829184n^2 + 1500622848n + 383201280\big)u_n \\
    + \big(1769472n^6 + 25657344n^5 + 150073344n^4 + 453150720n^3 +\\ 746896896n^2 + 640811520n + 224985600\big)u_{n+1} \\
    + \big(110592n^6 + 1769472n^5 + 11582208n^4 + 39696768n^3 +\\ 75175488n^2 + 74657088n + 30421440\big)u_{n+2} \\
    + \big(-5120n^6 - 89600n^5 - 647744n^4 - 2473952n^3\\ - 5258744n^2 - 5889032n - 2708160\big)u_{n+3} \\
    + \big(32n^6 + 608n^5 + 4738n^4 + 19353n^3 +\\ 43628n^2 + 51376n + 24640\big)u_{n+4} = 0.
\end{multline*}
with initial conditions $u_0 = 1$, $u_1 = 0$, $u_2 = 216$, $u_3 = 18816$.
\subsection*{Step 0: Eigenvalues}  
The characteristic polynomial of the recurrence is
\begin{gather*}
\chi(X) = X^4 - 160X^3 + 3456X^2 + 55296X + 331776 = \prod_{i=1}^4 (X - \lambda_{i}),\\
\text{with}\quad
\lambda_1 \approx 129.9898, \quad \lambda_2 \approx 42.0400, \quad \lambda_3 \approx -6.0149 + 4.9530i, \quad \lambda_4 = \bar{\lambda}_3.
\end{gather*}
The assumption of having one simple dominant eigenvalue \( \lambda_1 > 0 \) is satisfied. The limit $A$ of the recurrence operator \( A(n) \) is the companion matrix of \( \chi \).

\subsection*{Step 1: Positive Contracted Cone}  
The eigenvalues of this recurrence are all non-rational algebraic numbers and give a basis $V_{i,j}$ by \cref{eq:basis} with algebraic coordinates.
Following \cref{sec:approx-cone}, an approximate rational basis \( \tilde{V}_{i,j} \) is computed and used to construct the cone \( K(\tilde{V}_{i,j}) \), which is an approximation of the cone \( K(V_{i,j}) \) up to a certain precision.  Next, the vector $\tilde{V}_1$ is multiplied by a positive real number to make the cone $K(\tilde{V}_{i,j})$ positive.
Finally, the approximate cone is checked to be contracted by \( A \). If not, the precision is increased.

\subsubsection*{Approximate Basis}
With 3 digits of precision, we get the basis $(\tilde{V}_{1},\tilde V_{2},\tilde V_{3},\bar{\tilde V}_{3})$ that forms the columns of the matrix
\begin{equation}\label{eq:matrixT}T=\renewcommand{\arraystretch}{1.3} 
\begin{pmatrix}
1 & 1 & 1 & 1 \\
130 & \frac{421}{10} & -6 + \frac{54}{11}i & -6 - \frac{54}{11}i \\
16800 & 1770 & \frac{58}{5} - \frac{298}{5}i & \frac{58}{5} + \frac{298}{5}i \\
2180000 & 74800 & 225 + 416i & 225 - 416i
\end{pmatrix}. \end{equation}
Both \( K( \tilde{V}_1,\tilde V_2,\tilde V_3,\bar{ \tilde{V}}_3) \) and \( L( \tilde{V}_1,\tilde V_2,\tilde V_3,\bar{ \tilde{V}}_3) \) are cones made of vectors of the form \[T\cdot\trsp{(\alpha_1,\alpha_2,a+ib,a-ib)},\] with distinct constraints:
\begin{alignat*}{3}
   \mathcal C_K &:= \{\alpha_1 \geq |\alpha_2|, \quad \alpha_1^2 \geq a^2 + b^2\}\qquad &&\text{for $K( \tilde{V}_1,\tilde V_2,\tilde V_3,\bar{ \tilde{V}}_3)$},\\
   \mathcal C_L &:= \{\alpha_1 \geq |\alpha_2|, \quad \alpha_1 \geq |a| + |b|\}\qquad&&\text{for $L( \tilde{V}_1,\tilde V_2,\tilde V_3,\bar{ \tilde{V}}_3)$}.
   \end{alignat*} 
\subsubsection*{Rescaling}
We compute  $\beta > 0$ such that the cone $K(\beta \tilde{V}_1, \tilde{V}_2, \tilde{V}_3, \bar{\tilde{V}}_3)$ lies inside $\mathbb{R}^3 \times\mathbb R_{\ge0}$. For this, it is sufficient to find a rational $\beta>0$ such that 
\[2180000\beta - 74800 + 450a - 832b>0, \text{~for all } a,b\in\mathbb{R},~ a^2+b^2\leq 1.\]
The smaller~$\beta$, the larger the cone, which results in a smaller stability index. Here, taking $\beta=1/25$ is sufficient, leading to
\[K_0=T\cdot\trsp{\left(\frac1{25}\alpha_1,\alpha_2,a+ib,a-ib\right)}\quad\text{with}\quad \mathcal C_K.\]
Similarly, the cone $L_0=L(\tilde{V}_1/25,\tilde V_2,\tilde V_3,\bar{ \tilde{V}}_3)$ with $\mathcal C_L$ is in $\mathbb{R}^3 \times\mathbb R_{\ge0}$.
\subsubsection*{Test Contraction}
To verify whether the cone \( K_0 \) is indeed contracted by \( A \), it is sufficient to test that its extremal vectors are mapped inside of it. Up to normalization, the set of extremal vectors of \( K_0 \) is
\[
\text{Ext}(K_0) = \left\{\beta \tilde{V}_1 \pm \tilde{V}_2 + 2 \Re((a + ib)\tilde{V}_3) \, \middle| \, a^2 + b^2 = 1 \right\}.
\]

For instance, taking \( W_{a,b} = \beta \tilde{V}_1 + \tilde{V}_2 + 2 \Re((a + ib) \tilde{V}_3) \), the coefficients~$(\alpha_1,\alpha_2,\alpha_3,\overline\alpha_3)$ of $AW$ in the basis from \cref{eq:matrixT} are the entries of  \( T^{-1} A W \).
The corresponding vector belongs to~$K_0$ if and only if the following three polynomials are positive under the constraint $a^2 + b^2 = 1$:
\begin{align*}
\alpha_1 - \alpha_2 &= \frac{4679376341663687}{1457566040726} - \frac{17478404684}{728783020363} a - \frac{943647073524}{8016613223993} b, \\
\alpha_1 + \alpha_2 &= \frac{33569621452327161}{10202962285082} - \frac{74575443452}{5101481142541} a - \frac{2936449777372}{56116292567951} b, \\
\alpha_1^2 - \alpha_3 \bar{\alpha}_3 &= \frac{60486439965294533189130216454777027}{5725524166494313887186069820} \\
&\quad - \frac{71525124113790875227828970643}{650627746192535668998417025} a \\
&\quad - \frac{3898235264387376435017043223868}{7156905208117892358982587275} b \\
&\quad - \frac{1606389690866415174695257143}{5725524166494313887186069820} a^2 \\
&\quad + \frac{6917145006021706588092982004}{7156905208117892358982587275} ab.
\end{align*}
In view of the small degree of these polynomials, it is not difficult to check their positivity on the circle $a^2+b^2=1$. For efficiency reasons however, it is simpler to check a sufficient condition: replace $b^2$ by $1-a^2$ (as done above) and use interval arithmetic to compute a lower bound on these polynomials on the square $[-1,1]^2$ that contains the circle. For example, for the first polynomial 
\[\alpha_1-\alpha_2\geq 
 \frac{4679376341663687}{1457566040726}- \frac{17478404684}{728783020363} -\frac{943647073524}{8016613223993}>\frac{3617966}{1127}
\]
shows that  \(\alpha_1 - \alpha_2>0\) for all \((a, b) \in [-1,1]^2\), hence also for $(a, b)$ on the unit circle.
If the computed lower bound is not positive on the intervals, we increase the precision of the approximations. This ensures positivity, as the exact cone is contracted by A, and the approximate cone will be sufficiently accurate at higher precision.

This process can be applied to the remaining two polynomials. In the case of \( \alpha_1^2 - \alpha_3 \bar{\alpha}_3 \), as it is not linear in the variables \( a \) and \( b \), interval arithmetic over-approximates the interval containing its value, but still returns a positive interval from which the positivity is deduced.

The same steps show that the remaining extremal vectors of form $\beta \tilde{V}_1 - \tilde{V}_2 + 2 \Re((a+ib)\tilde{V}_3$ are also mapped inside of~$K_0$ by~$A$, confirming that the cone \( K_0 \) is contracted by \( A \).\\

The case of the cone $ L_0$ is similar, except that no variable occur in the extremal vectors, allowing the lower bound to be computed directly. First, 
\[
\text{Ext}(L_0) = \left\{\beta \tilde{V}_1 \pm \epsilon \tilde{V}_2 + 2 \Re((a+ib)\tilde{V}_3) \, \middle| \, \epsilon=\pm 1,~ (a,b) \in \{(\pm 1,0),(0,\pm1)\} \right\}.
\]
Let $W_{L}=\beta \tilde{V}_1+\epsilon \tilde{V}_2+2 \Re((a+ib)\tilde V_3)$ be an extremal vector of $L_0$. With the same notation, $T^{-1}AW_{L}=(\alpha_1\alpha_2,\alpha_3,\bar{\alpha}_3)$. The polynomial system that implies the contraction by $A$ is given by 
$$\alpha_1\pm \alpha_2>0,\quad\alpha_1+ (\pm \Re(\alpha_3) \pm \Im (\alpha_3))>0.$$
All these inequalities are linear. For instance, here is the proof of positivity of one of the polynomials:
\begin{align*}
\small
\alpha_1 &+ \Re(\alpha_3) + \Im(\alpha_3)\\
&= \frac{473760971363066}{3643915101815} + \frac{1168800515801}{457566040726}\epsilon - \frac{1513887146713}{1457566040726}a - \frac{8043173512184}{728783020363}b \\
&\geq \frac{473760971363066}{3643915101815} \textcolor{red}{-} \frac{1168800515801}{457566040726} \textcolor{red}{-} {\frac{1513887146713}{1457566040726}} \textcolor{red}{-} \frac{8043173512184}{728783020363}\\
&=\frac{426838384645861}{3643915101815}>0.
\end{align*}

\subsection*{Step 2: Stability Index}
The computations are similar, but now the vector of coefficients  $(\alpha_1,\alpha_2,\alpha_3,\overline\alpha_3)$ is \( T^{-1} A(n) W \), for each extremal vector~$W$ of $K_0$ (or $L_0$), with a parameter~$n$.
The goal is to find \( n_0 \in \mathbb{N} \) such that the polynomials $\alpha_1-\alpha_2,\alpha_1+\alpha_2,\alpha_1^2-\alpha_3\overline\alpha_3$ in $\mathbb Q[a,b,n]$ are positive for all \( n \geq n_0 \) and \( a, b\) with $a^2+b^2=1$.
These polynomials can be written as
\[P(n,a,b) = \sum_{i=0}^{\deg(P)} P_i(a,b) n^i,\]
where the positivity of the leading coefficient \( P_{\deg(P)}(a,b) \) has already been verified in the previous step (it corresponds to the condition for the limit matrix~$A$).
In order to determine the stability index, we compute a univariate polynomial in \( n \), denoted \( \tilde{P}(n) \), such that
\[\tilde{P}(n) \le P(n,a,b), \qquad n \in \mathbb{N},\quad a^2+b^2=1.\]
This step is performed by computing a lower bound for the coefficients \( P_i(a,b) \) under the constraints $(a,b)\in[-1,1]^2$. For example, the difference \( \alpha_1 - \alpha_2 \) is
{\small
\begin{align*}
\sum_{i=0}^6 P_i(a,b) n^i 
&= \left( \frac{4679376341663687}{1457566040726} - \frac{17478404684}{728783020363} a - \frac{943647073524}{8016613223993} b \right) n^6 \\
&\quad + \left( \frac{79944448001932229}{1457566040726}+ \frac{16416449422588}{728783020363}a - \frac{182507077766172}{8016613223993} b  \right) n^5 \\
&\quad + \left( \frac{9024446755521470599}{23321056651616} + \frac{842882173059821 }{2915132081452}a - \frac{9457184901260669}{32066452895972}b \right) n^4 \\
&\quad + \left( \frac{67295900421104575215}{46642113303232}+ \frac{8296842530571717}{5830264162904} a- \frac{98453187869997173}{64132905791944}b  \right) n^3 \\
&\quad + \left( \frac{34946375928911783293}{11660528325808} + \frac{2446943888987389}{728783020363}a - \frac{62893142299669887}{16033226447986}b\right) n^2 \\
&\quad + \left( \frac{9576203440704527405}{2915132081452}+ \frac{5537344397615665}{1457566040726}a - \frac{39267471669723590}{8016613223993}b \right) n \\
&\quad + \left( \frac{1079862274983488675}{728783020363} +\frac{1215292264210220}{728783020363} a- \frac{19190127116357680}{8016613223993}b \right).
\end{align*}}
For the leading term of this polynomial, \( P_6(a,b) \), a lower bound was found above.  Proceeding similarly to find lower bounds for the other coefficients gives the lower bound
\begin{multline*}
\alpha_1 - \alpha_2\ge
\frac{3617966}{1127} n^6 + \frac{14413088}{263} n^5 + \frac{221783003}{574} n^4\\
+ \frac{18718127}{13} n^3 + \frac{86701307}{29} n^2 + \frac{19657805}{6} n + \frac{13299050}{9},
\end{multline*}
which is positive for all \( n \in \mathbb{N} \), showing that $\alpha_1-\alpha_2$ itself is positive for all \( n \in \mathbb{N} \).

Repeating this process for all the other polynomials confirms that the stability index is \( n_0 = 0 \), even though the nonlinearity of the polynomial \( \alpha_1^2 - \alpha_3 \bar{\alpha}_3 \) with respect to the variables \( a \) and \( b \) induces a loss of accuracy in the interval analysis. 

Using $L_0$ instead of $K_0$ is both more direct and more efficient, since the polynomial system used to determine the stability index is linear in all variables except~$n$.

\subsection*{Step 3: Initial Conditions} The last step checks that the sequence enters the cone. As $n_0=0$, $U_0>0$ and $U_0\in K_0$, this step is immediate and the positivity of the sequence is proved.

\section{Experiments}
\label{section:experiments}
A first implementation of our algorithm has been written in Maple\footnote{The Maple package and a worksheet of examples are available at\\ \url{https://gitlab.inria.fr/alibrahi/positivity-proofs}}. We now report on its behaviour on a selection of tests\footnote{Timings are given on a MacBook Pro equipped with the Apple M2 chip, using Maple2024.}.

\subsection{P-finite Examples}
The positivity of the sequence $(s_n)_n$ in \cref{SZ} is established by our code using the following instruction:
{\color{darkred}\begin{Verbatim}
> Positivity({(2*n^2+8*n+8)*s(n+2)=(81*n^2+243*n+186)*s(n+1)
> -(729*n^2+1458*n+648)*s(n),s(0)=1,s(1)=12},s,n);
\end{Verbatim}
}
\noindent which returns
\[{\color{blue}\text{true},
\begin{pmatrix}
\frac{51}{100} & 1 \\
\frac{1377}{100} & \frac{27}{2}
\end{pmatrix},\begin{pmatrix}
1 & -1 \\
1 & 1
\end{pmatrix},~2
}\]
The first part of the output, ``true'', means that the sequence $(s_n)$ is positive. Next comes a matrix $T$ whose columns are the vectors~$V_{ij}$ used to construct the cone. The next matrix~$M$ defines the cone: the columns of $MT$ are the extremal vectors of the contracted cone $K$, which is polyhedral. Finally, the integer $2$ indicates that $S_{n}=(s_n,s_{n+1})$ belongs to ${K}$ for all $n\geq 2$.

Similar computations with the recurrences
\begin{gather*}
    (n + 1)^3 d_{n + 1} =4(2n + 1) \left( 3n^2 + 3n + 1 \right) d_{n} - 16n^3 d_{n - 1}\\
    (20n + 1) f_{n + 3} = 3 (5n + 1) f_{n + 2} - (13n + 1) f_{n + 1} + (18n + 7) f_{n}\\
    (n + 1)u_{n + 3} = \left(\frac{77}{30}n + 2\right)u_{n + 2} - \left(\frac{13}{6} n - 3\right)u_{n + 1} +\left(\frac{3}{5}n + 2\right)u_{n}
\end{gather*}
with initial conditions $d_0 = 1,d_1 = 4,f_0 = 1,f_1 = 1,f_2 = 3, u_0=1,u_1=15/14,u_2=8/7$ coming from previous works~\cite[Ex.4.2]{StraubZudilin2015}, \cite[Ex.1]{pillwein2015efficient}, \cite{IbrahimSalvy2024}.
For each recurrence, the value \( n_0 \) represents the computed index such that the sequence vectors \( U_n \) satisfy \( U_n \in K \) for all \( n \geq n_0 \), where \( K \) is the constructed cone. The computations yield \( n_0 = 1 \) for \( d_n \), \( n_0 = 12 \) for \( f_n \), and \( n_0 = 1374 \) for \( u_n \). 
 (With our previous method $u_n$ led to~$n_0=3040$~\cite{IbrahimSalvy2024}.)

\subsection{C-finite Sequences}
Nuspl and Pillwein~\cite{nuspl2022comparison} have performed extensive experiments of their method on sequences from the OEIS~\cite{oeis} that satisfy linear recurrences with constant coefficients. In their list, 286 recurrences have a single simple dominant eigenvalue and can be subjected to our method. Their orders range from~1 to~11. In this situation of recurrences with constant coefficients, our algorithm becomes simpler, as the step computed the stability index can be skipped. 
In all the examples, the execution time was smaller than 0.5 seconds, both with the circular cones of Vandergraft's construction and the polyhedral cones of \cref{sec:polyhedral-cones}. The value of \( n_0 \) such that \( U_n \) belongs to the cone for all \( n \geq n_0 \) was always equal to~1. 

For all the~286 recurrences except~14 of them, a polyhedral cone can be computed by \cref{prop:regularpoly2}. One of these 14~exceptional cases is the OEIS sequence A001584, defined by
\[u_{n+8}=2u_{n+5}-u_{n+2}+u_n,\quad u_0=\dots=u_7=1.\]
There, the dominant eigenvalue is $\lambda_1\simeq 1.22$, root of $x^4-x-1$, while the next one in modulus is $\lambda_2\simeq-0.73+0.93i$, root of $x^4-x+1$. For $s<6$ one has $\|\lambda_2\|_{\mathcal P_s}>\lambda_1$ and the first admissible value in \cref{prop:regularpoly} is $s=6$. 
As the values of~$s>2$ used in \cref{prop:regularpoly} have not been implemented yet, Vandergraft's construction is used by our code in this case. 

\subsection{Gillis-Reznick-Zeilberger family}\label{subsec:GRZ}
\begin{table}
    \centering
\begin{tabular}{ccc|cr|cr}
$k$  & max   & max $\log_{10}$ & $n_{K}$ & $t_{K}$ & $n_{L}$ & $t_{L}$ 
\\ 
& deg& coeffs&&&&\\ \hline
 4 & 6 & 10 & 2 &  0.1 & 3 &  0.1 
\\
 5 & 10 & 20 & 2 &  0.3 & 546 &  0.2 
\\
 6 & 15 & 32 & 2 &  0.5 & 3 &  0.2 
\\
 7 & 21 & 52 & 2 &  1.2 & 3 &  0.4 
\\
 8 & 28 & 72 & 5 &  2.0 & 4 &  0.5 
\\
 9 & 36 & 100 & 4 &  6.4 & 4 &  2.4 
\\
 10 & 45 & 129 & 8 &  8.2 & 4 &  1.7 
\\
 11 & 55 & 175 & 30 &  21.5 & 3 &  2.8 
\\
 12 & 66 & 202 & 17 &  23.6 & 4 &  5.3 
\\
 13 & 78 & 270 & 75 &  66.1 & 1 &  7.3
\\
 14 & 91 & 310 & 31 &  64.7 & 1 &  16.5
\\
 15 & 105 & 366 & 39 &  192.5 & 1 &  21.1 
\\
 16 & 120 & 433 & 249 &  220.7 & 1 &  39.8
\\
 17 & 136 & 531 & 59 &  610.8 & 1 &  53.9
\\
 18 & 153 & 569 & 527 &  592.0 & 1 &  78.6
\\
 19 & 171 & 699 & 774 &  2704.8 & 1 &  120.4
\\
 20 & 190 & 739 & 100 &  1171.8 & 1 &  209.1
\\
 21 & 210 & 850 & ---  & ---  & 1 &  446.5
\\
 22 & 231 & 960 & ---  & ---  & 1 &  416.9
\\
 23 & 253 & 1115 & ---  & ---  & 1 &  731.6
\\
 24 & 276 & 1140 & ---  & ---  & 1 &  1827.2 
\end{tabular}
    \caption{Experiments with the Gillis-Reznick-Zeilberger family}
    \label{tab:GRZ}
\end{table}
The family from \cref{eq:GRZ} comes from the diagonal of the rational function
\[A_k(x_1, x_2, \dots, x_k) = \frac{1}{1 - (x_1 + x_2 + \dots + x_k) + k! x_1 x_2 \dots x_k}.\]
For a long time, it was only a conjecture that this sequence was positive for all~$k\ge4$. Thus it became a nice test for positivity-proving algorithms. In~2007, Kauers used a modification of his method with Gerhold~\cite{GerholdKauers2005} to prove the positivity for $k=4,5,6$~\cite{Kauers2007a} with a runtime that was ``no more than a few minutes'' on a computer of that time. In~2019, Pillwein~\cite{Pillwein2019} used a variant of this approach to prove the positivity up to~$k=17$ (but did not report on computation time). A small difficulty is that computing the recurrences is also time-consuming. In our experiment, we used the Maple package \texttt{CreativeTelescoping} that implements a recent reduction-based algorithm~\cite{BrochetSalvy2024} to produce the recurrences for~$k$ up to~24 (the last one has order~24, coefficients of degree~276 with integer coefficients having up to~1140 decimal digits). The results are presented in \cref{tab:GRZ}. The first three columns give data on the recurrence: the index~$k$ of the sequence, which is also the order of the recurrence; the maximum degree of the polynomial coefficients; the maximal number of decimal digits of the coefficients of these polynomials. The next two columns give the results obtained with Vandergraft's cone~$K$ from \cref{subsec:vandergraft}: first the stability index and next the computation time. The last four values have not been computed with this method. The last two columns give the same results when using the polyhedral cone~$L$ of \cref{sec:polyhedral-cones}. For both cases, the computation time is mostly spent in the computation of the stability index. In order to save some time in the case of the cone~$K$, this index is bounded using interval analysis as shown in the detailed example of \cref{sec:example}; this explains why the values of $n_K$  are most often significantly larger than the corresponding~$n_L$ computed with the polyhedral cones.
\section{Conclusion}
The method presented here proves positivity for a large class of linear recurrences. The class can be extended in several directions:
\begin{itemize}
    \item[--] As mentioned before, the conditions that the first and last coefficients of the recurrence do not vanish on~$\mathbb N$ are not really constraints: it is sufficient to check separately finitely many initial terms and shift the recurrence;
    \item[--] Similarly, being of Poincaré type is not a severe restriction, as a positivity-preserving transformation reduces to that situation;
    \item[--] All coefficients of recurrences and initial conditions considered here are rational numbers; the algorithm applies with coefficients in a real number field, it is only necessary to be able to decide equalities and inequalities;
    \item[--] The main constraint is the uniqueness of the dominant eigenvalue of the limit eigenvalue. For instance, T\'uran's inequality for the Legendre polynomials cannot be proved by our method. This is also the case for the recurrence from~\cite{MelczerMezzarobba2022,YuChen2022,BostanYurkevich2022a}.
    There are cases where the method seems to extend to situations where the matrices~$A(n)$ have a unique dominant eigenvalue, while the limit matrix does not. We plan to report on this in a future work.
\end{itemize}

\section*{Acknowledgements}
This work has been supported in part by the ANR project NuSCAP ANR-20-CE48-0014. It benefited from discussions with Alin Bostan and Mohab Safey El~Din.

\printbibliography
\end{document}